\DeclareMathOperator*{\argmax}{\arg\!\max}
\newtheorem{theorem}{Theorem}
\newtheorem{definition}{Definition}
\newtheorem{lemma}{Lemma}
\begin{document}

\title{Time Scheduling and Energy Trading for Heterogeneous Wireless-Powered and Backscattering-based IoT Networks}
\author{Ngoc-Tan~Nguyen,~\IEEEmembership{Student Member,~IEEE,}
	Diep~N.~Nguyen,~\IEEEmembership{Senior Member,~IEEE,}
	Dinh~Thai~Hoang,~\IEEEmembership{Member,~IEEE,}
	Nguyen Van Huynh,~\IEEEmembership{Student Member,~IEEE,}
	Eryk~Dutkiewicz,~\IEEEmembership{Senior Member,~IEEE,}
	Nam-Hoang~Nguyen, and~Quoc-Tuan~Nguyen%,~\IEEEmembership{Member,~IEEE}
	\thanks{N.~T.~Nguyen is with the School of Electrical and Data Engineering, University of Technology Sydney, Sydney, NSW 2007, Australia and JTIRC, VNU University of Engineering and Technology, Vietnam National University, Hanoi, Vietnam (e-mail: tan.nguyen@student.uts.edu.au).}
	\thanks{D.~T.~Hoang, D.~N.~Nguyen, and E.~Dutkiewicz are with the School of Electrical and Data Engineering, University of Technology Sydney, Sydney, NSW 2007, Australia (email: hoang.dinh@uts.edu.au, diep.nguyen@uts.edu.au, eryk.dutkiewicz@uts.edu.au).}% 
	\thanks{N.~H.~Nguyen and Q.~T.~Nguyen are with the JTIRC, VNU University of Engineering and Technology, Vietnam National University, Hanoi, Vietnam (email: hoangnn@vnu.edu.vn, tuannq@vnu.edu.vn).}
	\thanks{An abridged version of this paper will be presented at the IEEE Globecom Conference, Dec, 2020~\cite{Tan2020}.}
}%

% \vspace{-5mm}}	
\maketitle

\begin{abstract}
Future IoT networks consist of heterogeneous types of IoT devices (with various communication types and energy constraints) which are assumed to belong to an IoT service provider (ISP). %The ISP is responsible for collecting sensing data from the IoT devices to provide for its services. 
To power backscattering-based and wireless-powered devices, the ISP has to contract with an energy service provider (ESP) via a dedicated power beacon. This article studies the strategic interactions between the ISP and its ESP and their implications on the joint optimal time scheduling and energy trading for heterogeneous backscattering-based and wireless-powered devices. To that end, we propose an economic framework using the Stackelberg game to maximize the network throughput and energy efficiency of both the ISP and the ESP. Specifically, the ISP leads the game by sending its optimal service time  and energy price request (that maximizes its profit) to the ESP. The ESP then optimizes and supplies the transmission power which satisfies the ISP's request (while maximizing ESP's utility). To obtain the \textit{Stackelberg equilibrium} (SE) for the proposed Stackelberg game, we apply a backward induction technique which first derives a closed-form solution for the ESP. Then, to tackle the non-convex optimization problem for the ISP, we leverage the \textit{block coordinate descent} and \textit{convex-concave procedure} techniques to design two partitioning schemes (i.e., partial adjustment (PA) and joint adjustment (JA)) to find the optimal energy price and service time that constitute local SEs. Numerical results reveal that by jointly optimizing the energy trading and the time allocation for heterogeneous IoT devices, one can achieve significant improvements in terms of the ISP's profit compared with those of conventional transmission methods, e.g., bistatic backscatter and harvest-then-transmit communication methods. Different tradeoff between the ESP's and ISP's profits and complexities of the PA/JA schemes can also be numerically tuned. Simulations also show that the obtained local SEs approach the socially optimal welfare when the ISP's benefit per transmitted bit is higher than a given threshold.

%JA scheme always outperforms the PA scheme in terms of the ISP's profit in all approaches. By contrast, the PA scheme can obtain better profit for the ESP and has lower computation efficiency than the JA scheme.

\end{abstract}

\begin{IEEEkeywords}
\textbf{Backscattering, wireless-powered, optimization, Stackelberg game, low-power communications, heterogeneous IoT networks.}
\end{IEEEkeywords}

%%%%%%%%%%%%%%%%%%%%%%%%%%%%%%%%%%%%%%%%%%%%%%%%%%%%%%%%%%%%%%%%%%%%%%%%%%%%%%%%%%%%%
\section{Introduction}

%Internet of Things (IoT) is a smart network which exploits state-of-the-art technologies to enable information sharing among IoT devices~\cite{Bandyopadhyay2011}. Over the last decade, with rapid development, IoT has been applied in many areas such as smart city, home, agriculture, healthcare, and transportation to facilitate our lives~\cite{Bandyopadhyay2011}-\cite{Fuqaha2015}. To meet the low-cost and lightweight requirements, IoT devices are usually powered by batteries with small capacities to support their operations. However,

Frequent recharging/replacing batteries for a massive number of IoT devices (in billions) can be costly, inconvenient, and impractical in some cases (e.g., biomedical implants)~\cite{Derrick2019}. Recent advances in wireless-powered and backscattering communications can alleviate such dependence on battery or even enable the next generation of IoT devices that are battery-free. For the former, a \textit{Harvest-then-transmit} (HTT)~\cite{Ju2014HTT}-\cite{Salem2016} protocol that consists of two phases, i.e., harvesting energy from surrounding radio frequency (RF) signals and active transmissions, can be employed. 
%However, due to the low efficiency of energy harvesting and imperfect battery storage, energy harvested in the harvesting phase is usually insufficient to sustain the RF communication phase of IoT devices~\cite{Ku2016}. 
The latter, backscatter communications technology, allows IoT devices to transmit/backscatter information to their receivers by reflecting (instead of actively transmitting) RF signals. These RF signals can come from either dedicated (e.g., card readers) or ambient (e.g., FM, TV stations) sources. There are three typical types of backscatter communications including monostatic, bistatic, and ambient backscatter communications~\cite{Huynh2018}-\cite{Liu2013}. Future IoT networks may consist of these heterogeneous types of IoT devices (with different energy constraints/requirements) which belong to one or different IoT service providers (ISP). %The ISP is responsible for collecting sensing data from the IoT devices to provide for its services. 
To ``power" backscattering-based and wireless-powered devices, an ISP has to contract with an energy service provider (ESP) via a dedicated power beacon (PB).

In a WPBC network, a wireless-powered device (WPD) is designed to perform either backscatter communications (i.e., passive transmissions) or transmissions using its RF circuit (i.e., active transmissions) and the energy harvested from the PB. As such, the performance of the WPBC system not only depends on the scheduling or time allocated for energy harvesting, passive, and active transmission operations of IoT devices~\cite{Wang2019},~\cite{Hoang2017Stackelbergame},~\cite{Chen2019} but also the energy contract with the ESP. Most existing works on the WPBC optimize time allocation for IoT devices' operations under the time-division multiplexing (TDM) framework with the assumption of homogeneous IoT devices~\cite{Wang2019}-\cite{Wang2018Stackelberg}. In practice, however, various types of IoT devices with different hardware capabilities and configurations, e.g., performing backscattering, HTT or both can coexist, which have not been considered in the literature. This article studies the strategic interactions between the ISP and its ESP and their implications on the joint optimal energy trading as well as time scheduling for heterogeneous WPBC (HWPBC) networks.

To that end, we propose an economic framework based on the Stackelberg game to jointly maximize the network throughput and energy efficiency of both the ISP and the ESP. The cost function of the ESP is more open and predictable to the ISP but not the vice versa. This is because each ISP has its own set of heterogeneous IoT devices with different hardware capabilities and constraints, and their operating parameters (e.g., the scheduling time) are yet available but to be optimized. Moreover, in practice, an ISP often has more than one option to select an ESP. For that, it has the advantage to initiate/lead the game. In fact, unlike general non-cooperative games, the utility of the leader under a \emph{Stackelberg equilibrium} (SE) is greater or at least equal to that at any \emph{Nash equilibrium} (NE). As such, the ISP can proactively select an energy service from the ESP (i.e., the transmission power of the PB) then leads the game by sending its optimal service time and energy price request (that maximizes its profit) to the ESP. The ESP, as the follower, then finds the optimal transmission power of the PB (i.e., the energy service) based on the offered energy price and service time from the ISP to maximize its benefits. Specifically, to capture the profit of the ESP, we adopt a practical price model for energy generation cost, called the quadratic model~\cite{Mohsenian2010}-\cite{Li2014Joint}. We then derive a closed-form for the optimal transmission power of the PB (ESP) based on the offered price and requested service time from ISP. The profit function of the ISP is defined as the difference between the profit from providing the data services and the energy cost. However, the profit maximization of the ISP is a non-concave problem with respect to the requested energy price and operation times of the PB and IoT devices. Moreover, these variables are strongly coupled, making the non-concave optimization problem of the ISP more challenging. 

To tackle the profit maximization problem of the ISP, we propose two partitioning schemes, called partial adjustment (PA) and joint adjustment (JA) schemes. Using the \textit{block coordinate descent} (BCD) technique~\cite{Tseng2001}, PA and JA schemes find the offered energy price and service time of the PB in an alternating and simultaneous manner, respectively. These schemes produce various strategies for the ISP with different impacts on the ESP's profit. Specifically, for the PA scheme, the iterative algorithm solves three sub-problems with respect to the requested price, service time of the PB, and scheduling times of the IoT devices at each iteration. Whilst the JA scheme splits the original problem into two sub-problems, in which one jointly optimizes the requested price and service time of the PB, and the other optimally allocates the operation times for IoT devices. Then, we adopt the \textit{convex-concave procedure} (CCCP) technique~\cite{Yuille2001} to address the joint sub-problem in the JA scheme. The proposed schemes guarantee to always achieve a local SE. For performance comparisons, we implement simulations to evaluate the profits of the ISP achieved by the proposed Stackelberg game approach (SGA) for heterogeneous IoT devices and other conventional transmission methods (i.e., bistatic backscatter communication mode (BBCM)~\cite{Hilliard2015} and HTT communication mode (HTTCM)~\cite{Ju2014HTT}). Numerical results show that the proposed SGA can outperform other conventional transmission methods in all simulation settings. %In addition, the ISP's and ESP's profits achieved by the PA/JA schemes as well as their complexities are investigated, then different tradeoffs between their complexities and performances are revealed through computational simulations.
Furthermore, to evaluate the efficiency of local SE, we use the concept of  \textit{Price of Anarchy} (PoA) ratio~\cite{Roughgarden2015},~\cite{Elias2013Joint}. PoA is the ratio of \emph{social welfare} (defined as the sum of the profits of the ISP and ESP) under the worst local SE to that when both the ISP and the ESP fully cooperate (to maximize the social welfare). Via simulations, we observe that the obtained SEs approach the socially optimal welfare when the ISP's benefit per bit exceeds a given threshold. The major contributions of this paper are summarized as follows:
\begin{itemize}
%	\item We propose a heterogeneous WPBC network comprising of two service providers, i.e., the ISP and ESP, in which various IoT devices with diverse hardware configurations and capabilities are considered to belong to the ISP. 
	
	\item 
	We propose a practical economic framework between the ISP and the ESP and study its implications on the joint optimal time scheduling and energy trading for heterogeneous backscattering-based and wireless-powered devices. 
	
	\item 
	We investigate the optimal strategies of the ESP and the ISP as well as the optimal time scheduling for all devices, captured by a local \textit{Stackelberg equilibrium} (SE) of the proposed game. In particular, two schemes (i.e., the PA and JA schemes) performing iterative algorithms based on the BCD and CCCP techniques are proposed to address the non-concave optimization problem of the ISP. These schemes offer different tradeoffs between their complexities and profits for both the ISP and ESP. Moreover, the iterative algorithms are guaranteed to converge to the locally optimal solutions.
	 
	\item 
	We further study the efficiency of the SE through the concept of price of anarchy (PoA). We observe that when the ISP's benefit per bit exceeds a given threshold, the obtained SE approaches the socially optimal welfare (achieved when the ISP and the ESP cooperatively maximize the \textit{social welfare}).
	
	\item We conduct intensive simulations to numerically study the performance and complexity tradeoff for various practical setups. Simulations show that the proposed framework always outperforms other conventional methods in terms of the ISP's profit. 

\end{itemize}

The rest of the paper is organized as follows. Section~\ref{section:system_model} presents the system model. Section~\ref{section:EnergyTrading} formulates the Stackelberg game for joint energy trading and time scheduling. Two iterative algorithms are then proposed in Section~\ref{section:IterativeAlgorithms} to find the local SE. The efficiency of \textit{Stackelberg game} is next analyzed in Section~\ref{section:Efficiency_SE}. We conduct and discuss simulations in Section~\ref{section:NumericalResults} to validate the theoretical derivations. Finally, Section~\ref{section:Conclusions} concludes the paper.

%%%%%%%%%%%%%%%%%%%%%%%%%%%%%%%%%%%%%%%%%%%%%%%%%%%%%%%%%%%%%%%%%%%%%%%%%%%%%%%%%%%%%
\section{System Model}
\label{section:system_model}
%#############################################################
\subsection{Network Setting}
As illustrated in Fig.~\ref{fig:system_model}, we consider a HWPBC network in which wireless-powered and backscattering devices owned by an ISP are powered via a PB of an ESP. For the ISP, we consider three types of low-cost IoT devices with different hardware configurations that can support two functions, i.e., the BBCM and/or HTTCM. The set of active wireless-powered IoT devices (AWPDs) that are equipped with energy harvesting and wireless transmission circuits is denoted by ${\mathcal{A}\rm{ }} \!\buildrel \Delta \over =\! \{\text{AWPD}_a| \forall a \!=\! \{1, \dots, A\}\!\}$. With this configuration, the AWPDs can operate in the HTTCM only. In addition, we denote ${\mathcal{P}\rm{ }} \!\!\buildrel \Delta \over = \!\!\{\text{PWPD}_p| \forall p \!=\! \{1, \dots, P\}\!\}$ as the set of passive wireless-powered IoT devices (PWPDs) that are designed with a backscattering circuit to perform the BBCM only. Finally, hybrid wireless-powered IoT devices (HWPDs) are equipped with hardware components to support both the HTTCM and BBCM. The set of HWPDs is denoted as ${\mathcal{H}\rm{ }}\!\! \buildrel \Delta \over =\! \{\text{HWPD}_h| \forall h\!=\!\{1, \dots, H\}\!\}$.% On the other hand, the ESP utilizes a dedicated power beacon (PB) to supply energy for the IoT devices. 

%================================
\begin{figure}[t]
	\centering
	\includegraphics[scale=1.1]{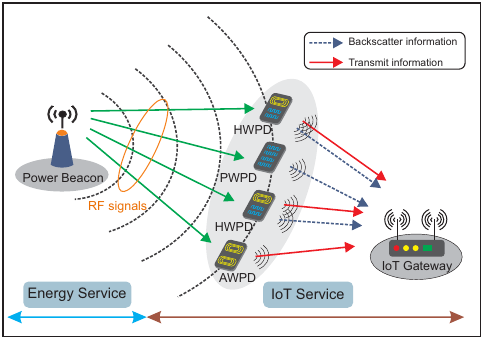}
	\caption{System model of the HWPBC network.}
	\label{fig:system_model}
\end{figure}
%===============================

The ISP operates over two consecutive working periods of the PB, i.e., \emph{emitting period} $\beta$ and \emph{sleeping period} $(1 \!-\! \beta)$ as illustrated in Fig.~\ref{fig:time_frame}. 
For simplicity and efficiency in time resource allocation for multiple IoT devices, the TDMA mechanism is adopted to avoid collisions among transmissions. We denote $\bm{\theta} \buildrel \Delta \over = \left( \!{{\theta_1}, \ldots, {\theta_p}, \ldots,{\theta_{P}}} \!\right)^\mathsf{T}$ and $\bm{\tau} \buildrel \Delta \over = \left(\! {{\tau_1}, \ldots, {\tau_h}, \ldots,{\tau_{H}}}\! \right)^\mathsf{T}$ as the backscattering time vectors for the PWPDs and HWPDs in the emitting period of the PB, respectively. Similarly, $\bm{\nu}  \buildrel \Delta \over = \left( {{\nu_1}, \ldots ,{\nu_a}, \ldots, {\nu_{A}}} \right)^\mathsf{T}$ and $\bm{\mu} \buildrel \Delta \over = \left( {{\mu_1}, \ldots, {\mu_h}, \ldots,{\mu_{H}}} \right)^\mathsf{T}$ are the transmission time vectors for AWPDs and HWPDs in the idle period of the PB, respectively. When the PB is in the emitting period, it transmits unmodulated RF signals, and thus the IoT devices (i.e., PWPDs and HWPDs) with the backscatter circuits can passively transmit their data by backscattering such signals~\cite{Bletsas2009},~\cite{Kimionis2004}. Meanwhile, the AWPDs and HWPDs equipped with energy harvesting circuits can harvest energy to support their active transmissions in the sleeping period of the PB. Note that, $\text{AWPD}_a$ can harvest energy in the entire emitting period (i.e., $\beta$), while the harvesting time of $\text{HWPD}_h$ is $(\beta \!-\! \tau_h)$ because it must backscatter in time slot $\tau_h$. In the sleeping period of the PB, the AWPDs and HWPDs can perform active transmissions to convey their data to the gateway based on the TDMA protocol.

%#############################################################
\subsection{Network Throughput Analysis}
The network throughput (denoted by $R_{sum}$) of communications between the IoT devices and gateway is defined as the total information bits decoded successfully at the gateway over the two periods of the PB.

%================================
\begin{figure*}[t]
	\centering
	\includegraphics[scale=0.54]{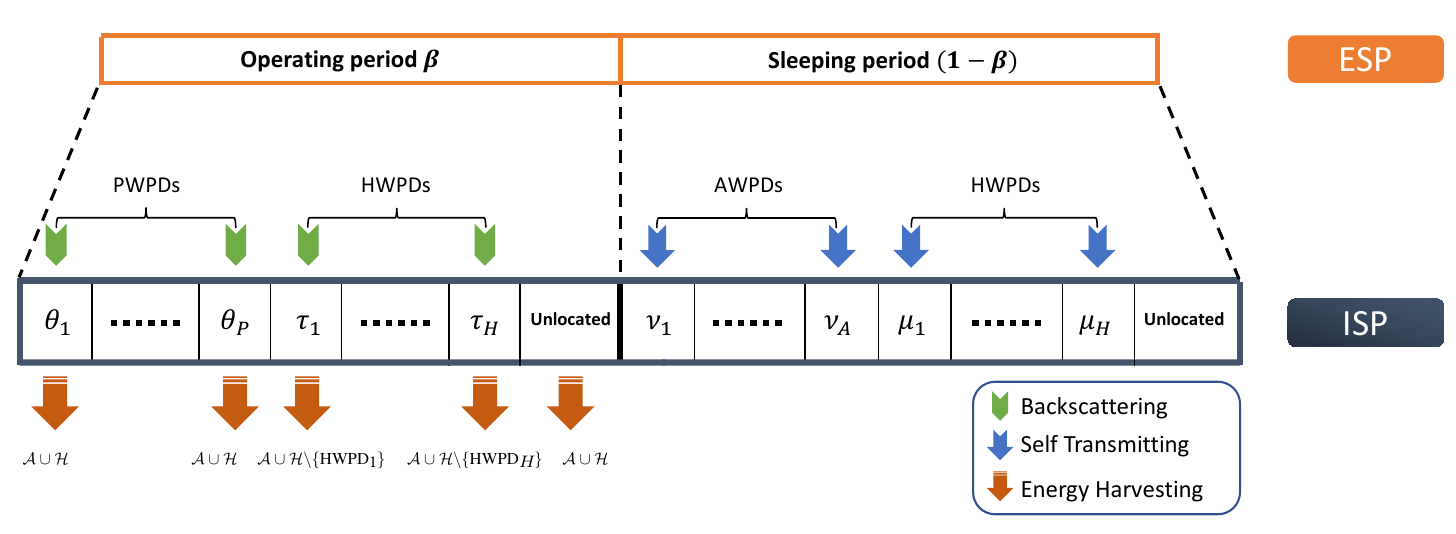}
	\caption{Normalized time frame of the HWPBC network.}
	\label{fig:time_frame}
\end{figure*}
%================================

%==========================================
\subsubsection{Emitting period of the PB}
In this period, PWPDs and HWPDs can backscatter the RF signals from the PB to transmit their information. We assume that the PWPDs and HWPDs implement backscatter frequency-shift keying (FSK), or binary FSK to gain more 3 dB in the receiver performance than the classic FSK~\cite{Kimionis2004},~\cite{Hilliard2015}. The power beacon transmits a continuous sinusoid wave of frequency $F_c$ with the complex baseband equivalent as follows:
\begin{equation}
	c\left( t \right) = \sqrt {2{P_S}} {e^{ - j\left( {2\pi \Delta Ft + \Delta \varpi } \right)}},
\end{equation} 
where the $P_S$ is the transmission power of the PB, $\Delta F$ and $\Delta \varpi$ are the frequency and phase offsets, respectively, between the PB and the IoT gateway. 

In the system under consideration, there are three types of communication links: (1) the links from the PB to the IoT devices, (2) the links from the IoT devices to the IoT gateway, and (3) the link from the PB to the IoT gateway which suffer flat fading due to the low bit rate of backscatter communications~\cite{Hilliard2015}. Since the communication ranges of IoT networks are limited, we can consider light-of-sight (LOS) environments in this paper, thus the channel gains for three aforementioned links are given by:
\begin{equation}
{g_{c}} \!=\! \frac{{{G_B}{G_D}{\lambda ^2}}}{{{{\left( {4\pi {d_{BD}}} \right)}^2}}},{g_{d}} \!=\! \frac{{{G_D}{G_G}{\lambda ^2}}}{{{{\left( {4\pi {d_{DG}}} \right)}^2}}},{g_{g}} \!=\! \frac{{{G_B}{G_G}{\lambda ^2}}}{{{{\left( {4\pi {d_{BG}}} \right)}^2}}},
\end{equation}
where $g_c$, $g_d$, and $g_g$ are the channel gains of links (1), (2), and (3), respectively. $G_B$, $G_D$, $G_G$ denote the antenna gains of the PB, IoT devices, and IoT gateway, respectively. $\lambda$ is the wavelength of the RF signals. $d_{BD}$, $d_{DG}$, and $d_{BG}$ are the communication distances of three aforementioned links.  
IoT devices are irradiated by the RF unmodulated signal $c(t)$. Then, the baseband scatter waveform at the IoT devices is written as:
\begin{equation}
	x\left( t \right) = \eta  {u_i}\left( t \right){\sqrt {{g_{c}}}} \;{c\!\left( t \right)}, \quad i \in \left\{ {0,1} \right\},
\end{equation}
where $\eta$ is the attenuation constant of the reflected waveform depending on the backscattering efficiency. For the binary FSK modulation, we consider two distinct load values $\Gamma_i$ with different rates $F_i$ to represent bits $u_i \in \{0, 1\}$, thus the baseband backscatter FSK waveform $u_i(t)$ models the fundamental frequency component of a $50\%$ duty cycle square waveform of frequency $F_i$ and random initial phase ${\Phi _i} \in \left[ {0,2\pi } \right)$:
\begin{equation}
	{u_i}\left( t \right) = {u_0} + \frac{{{\Gamma _0} - {\Gamma _1}}}{2}\frac{4}{\pi }\cos \left( {2\pi {F_i}t + {\Phi _i}} \right),\quad i \in \left\{ {0,1} \right\}
\end{equation}
where $u_0 = \left(A_s - \frac{{{\Gamma _0} + {\Gamma _1}}}{2}\right)$ with $A_s$ is a complex-valued term related to the antenna structural mode~\cite{Bletsas2010}.

The IoT gateway receives both the RF unmodulated signal directly from the PB and the backscattered signals from the IoT devices. Thus, the received baseband signal at the IoT gateway for duration $T$ of a single bit $u_i \in \{0, 1\}$ is given by~\cite{Choi2015Backscatter}:
\begin{equation}
\begin{aligned}
y\left(t\right) &\!=\! \sqrt {{g_{g}}} \;c\!\left( t \right) + \sqrt {{g_{d}}}\; x\!\left( t \right) + n\!\left( t \right) \\
&\!=\! \sqrt {\!2{P_S}} \Bigl\{{ \!\sqrt {\!{g_{g}}}} \!+\! \eta \sqrt {\!{g_{c}}} \sqrt {\!{g_{d}}} {u_0}\\
&{ \quad \!+  \eta \sqrt {\!{g_{c}}}\sqrt {\!{g_{d}}} \frac{2}{\pi }\!\left( {{\Gamma _0} \!-\! {\Gamma _1}}\! \right)\!\cos\! \left( \!{2\pi {F_i}t \!+\! {\Phi _i}}\! \right)} \!\Bigr\} \!\!+\! n(t),
\end{aligned}
\end{equation}
where $n(t)$ is the channel noise.
Before the maximum-likelihood estimation is implemented, carrier frequency offset and removing the \textit{direct current} value from the received signal $y(t)$ are carried out at the IoT gateway~\cite{Kimionis2004},~\cite{Hilliard2015}. The received signal $y(t)$ is then rewritten as follows:
\begin{equation}
	y\!\left( t \right) \!=\! \eta \sqrt {2{P_S}} \sqrt {{g_{c}}} \sqrt {{g_{d}}} \frac{2}{\pi }\!\left( {{\Gamma _0} \!-\! {\Gamma _1}} \right)\!\cos \!\left( {2\pi {F_i}t \!+\! {\Phi _i}} \right).
\end{equation}
Thus, the received power at the IoT gateway is given by:
\begin{equation}
	{P_R^{bb}} = {\eta^2}{g_{c}}{g_{d}}\frac{4}{{{\pi ^2}}}{\left( {{\Gamma _0} - {\Gamma _1}} \right)^2}{P_S}.
\end{equation}
The achievable rate of backscatter communications is given by:
\begin{equation}
	\label{eq: achievable_rate}
	W = \Omega_B {\log _2}\!\left( \!{1 + \frac{{\zeta {P_R^{bb}}}}{{{N_0}}}} \right),
\end{equation}
where $\Omega_B$ is the bandwidth of the unmodulated RF signal, $\zeta$ is the performance gap reflecting real modulation, and $N_0$ is the power spectral density (psd) of the channel noise.
We denote $W_p$ and $W_h$ to be the achievable rates of the $\text{PWPD}_p$ and $\text{HWPD}_h$ calculated in~\eqref{eq: achievable_rate}, respectively. Finally, the total throughput obtained by the AWPDs and HWPDs in the emitting period of the PB is determined as follows:
\begin{equation}
\begin{aligned}
{R^{bb}} &\!=\! \sum\limits_{p = 1}^P {{W_p}} {\theta _p} + \sum\limits_{h = 1}^H {{W_h}} {\tau _h} \\
&\!=\! \sum\limits_{p = 1}^P \!\Omega_B  {\theta _p}{\log _2}\!\!\left(\!{1 \!+\! {\kappa_p}{P_S} }\! \right) \!+\!\! \sum\limits_{h = 1}^H \!\Omega_B  {\tau _h}{\log _2}\!\!\left(\!{1 \!+\! {\kappa_h}{P_S}} \!\right),
 \end{aligned}
\end{equation}
where ${\kappa _p} = {\zeta}{\eta_p^2}{g_{c,p}}{g_{d,p}}{\left( {{\Gamma _0} - {\Gamma _1}} \right)^2}\frac{4}{{\pi ^2}{N_a^0}}$ and ${\kappa _h} = {\zeta}{\eta_h^2}{g_{c,h}}{g_{d,h}}{\left( {{\Gamma _0} - {\Gamma _1}} \right)^2}\frac{4}{{\pi^2}{N_h^0}}$.

%==========================================
\subsubsection{Sleeping period of the PB}
As mentioned in the previous subsection, only AWPDs and HWPDs are able to communicate with the gateway in this period by using their RF transmission circuits. The amount of harvested energy of the $\text{AWPD}_a$ and $\text{HWPD}_h$ from the PB are calculated as follows:
% We consider the energy consumption by self transmissions of AWPDs and HWPDs as the dominant energy consumption and ignore the energy consumed by electronic circuits. Therefore, all harvested energy of the AWPDs and HWPDs is utilized to transmit their data in the sleeping period of the PB:
% hence 
\begin{equation}
\left\{ {\begin{array}{*{20}{l}}
	{{E_a} = \beta P_{R,a}^B},\\
	{{E_h} = \left( {\beta  - {\tau _h}} \right)P_{R,h}^B},
	\end{array}} \right.
\end{equation}
where $P_{R,a}^B = {\varphi_{a}}{g_{{c},a}}{P_S}$ and $P_{R,h}^B = {\varphi_{h}}{g_{{c},h}}{P_S}$ are the received power at the $\text{AWPD}_a$ and $\text{HWPD}_h$ from the PB, respectively~\cite{BalanisAntenna2012}. $\{\varphi_a, \varphi_h\}$ are the harvesting efficiency coefficients of the $\text{AWPD}_a$ and $\text{HWPD}_h$, respectively. For simplicity, we consider the energy consumption by active transmissions of the AWPDs and HWPDs as the dominant energy consumption and ignore the energy consumed by electronic circuits~\cite{Lyu2019Relay}. Hence, the total amount of harvested energy of the AWPDs and HWPDs is utilized to transmit data in the sleeping period of the PB, and the transmission power of the $\text{AWPD}_a$ and $\text{HWPD}_h$ are $P_a^t \!=\! E_a/{\nu_a}$ and $P_h^t \!=\! E_h/{\mu_h}$, respectively. Then the total throughput $R^{st}$ achieved by active transmissions of the AWPDs and HWPDs in the sleeping period of the PB is formulated by:
\begin{equation}
\begin{aligned}
\label{eq: R^tr}
{R^{st}} &\!\!\!=\!\!\! \sum\limits_{a = 1}^A \!\!{{\nu _a}\Omega_D {{\log }_2}}\!\!\left(\!{\!1 \!\!+\!\! \frac{{\zeta}{g_{{d}\!,a}}{P_a^t}}{{N_a^0}}}\! \right) \!\!+\!\!\! \sum\limits_{h = 1}^H \!{{\mu _h}\Omega_D\! {{\log }_2}}\!\!\left(\!{\!1 \!\!+\!\! \frac{{\zeta}{g_{{d}\!,h}}{P_h^t}}{{N_h^0}}} \!\right)\\
&\!=\!\!\!\sum\limits_{a = 1}^A \!{\nu _a}{\Omega_D}{\log _2}\!\!\left(\!{\!1 \!\!+\!\! {\delta_a}\!\frac{{\beta}{P_S}}{{{\nu _a}}}}\! \!\right) \!\!+\!\! \sum\limits_{h = 1}^H \!{\mu _h}{\Omega_D}{\log _2}\!\!\left[\!{\!1 \!\!+\!\! {\delta_h}\!\frac{{(\!\beta \!-\!\tau_h\!)\!P_S}}{{{\mu _h}}}}\!\!\right]\!\!,
\end{aligned}
\end{equation}
where $\delta_{a} \!=\!\frac{{\zeta}{\varphi_a}{g_{{c},a}}{g_{{d},a}}}{N_0^a}$ and $\delta_{h} \!=\!\frac{{\zeta}{\varphi_h}{g_{{c},h}}{g_{{d},h}}}{N_0^h}$. $\Omega_D$ is the bandwidth for the HTT protocol, and $\{N_a^0, N_h^0\}$ are the noise of the communication channels from the $\text{AWPD}_a$ and $\text{HWPD}_h$ to the gateway, respectively.

Finally, the network throughput ($R_{sum}$) of the ISP can be determined as follows:
\begin{equation}
\begin{aligned}
\label{eq: Rsum1}
&{R_{sum}}\!\left( \bm{\theta},\bm{\nu},\bm{\tau},\bm{\mu} \right) = R^{bb} \!+\! R^{st}  \\
\!\! = \!\!&\sum_{p = 1}^P\! {{\Omega_B}{\theta_p}{\log _2}\!\left({1 \!+\! {\kappa_p}{P_S} }\! \right)} \!+\! \sum_{a = 1}^A \!{{\Omega_D}{\nu_a}{\log_2}\!\!\left(\!\!1 \!+\! {\delta_a}\frac{ {\beta}{P_S}}{\nu_a}\!\!\right)} \\
\!\!+\!\! &\sum_{h = 1}^H \!\!{\left\{\!\!{{\Omega_B}{\tau_h}{\log _2}\!\left({\!1 \!\!+\! {\kappa_h}{P_S} }\! \right)} \!+\! {\Omega_D}{\mu_h}{\log_2}\!\!\left[\!1 \!\!+\! {\delta_h}\!\!\frac{( \beta \!-\! {\tau_h}\!)\!{P_S}}{\mu_h}\!\!\right]\!\!\right\}}\!.
\end{aligned}
\end{equation}
 It is modeled as the achieved profit of the communication service to jointly maximize the benefits of both service providers in the HWPBC network.

%%%%%%%%%%%%%%%%%%%%%%%%%%%%%%%%%%%%%%%%%%%%%%%%%%%%%%%%%%%%%%%%%%%%%%%%%%%%%%%%%%%%%
\section{Joint Energy Trading and Time Allocation based on Stackelberg Game}
\label{section:EnergyTrading}
Based on the system model given in the Section~\ref{section:system_model}, in this section, we first introduce the Stackelberg game to model the strategic interaction between the ISP and ESP. Then, we derive the strategic behaviors of these service providers which maximize their own profits. 

%#############################################################
%\subsection{Stackelberg Game-based Energy Trading}
\subsection{Game Formulation}
\begin{itemize}
	\item \textbf{Leader payoff function}: The achievable benefit of the ISP is defined as follows:
	\begin{equation}
	\label{eq:Leader_func}
	\begin{aligned}
	{\bm{U}_L} \!\left( {{p_l}} , {\beta}, \bm{\psi}  \right) = {p_r}{R_{sum}} - {p_l}{ \beta}{P_S},
	\end{aligned}
	\end{equation}
	where $p_r$ is the benefit per bit transmitted by IoT devices, and $p_l$ is the energy price paid by the ISP to the ESP. The leader maximizes its utility function $\bm{U}_L$ w.r.t. the energy price $p_l$, operation time $\beta$, and time scheduling $\bm{\psi} \buildrel \Delta \over = (\bm{\theta}, \bm{\nu}, \bm{\tau}, \bm{\mu})$.

	\item \textbf{Follower utility function}: In this game, the PB is the follower and it optimizes its transmission power based on the requested energy price and operation time from the ISP. The utility function of the follower is formulated based on its profit obtained from the ISP and its cost incurred during the operation time:
	\begin{equation}
	\label{eq: follower_payoff_func}
	{{\bm{U}}_F}\left( {{P_S}} \right) = {\beta}\left[\; {{p_l}{P_S} - F({{P_S}})} \right],
	\end{equation}
	where $F(x) = {a_m}x^2 + b_{m}x$ is a quadratic function which is applied for the actual energy generation cost of the PB~\cite{Mohsenian2010}-\cite{Li2014Joint}.
\end{itemize}

%====================================================
\subsection{Solution to the Stackelberg Game}
The definition of the \emph{Stackelberg equilibrium} (SE) is stated as follows: 
\begin{definition}
	\label{definition: SEpoint}
	$(P_S^{*}, p_l^*, \beta^*, \bm{\psi}^*)$ is a Stackelberg equilibrium of the above game if  the following conditions are satisfied~{\normalfont\cite{Fudenberg1991}}:
	\begin{equation}
	\left\{ \begin{array}{ll}
	{{\bm{U}}_L}\!\left({{P_S^{*}},p_l^*, \beta^*, \bm{{\psi}}^*} \right) \ge {\bm{U}_L}\!\left({{P_S^{*}},{p_l},\beta, \bm{{\psi}}}\right),\\
	{{\bm{U}}_F}\!\left({{P_S^{*}},p_l^*,\beta^*, \bm{{\psi}}^*} \right) \ge {{\bm{U}}_F}\!\left( {{P_S},p_l^*,\beta^*, \bm{{\psi}}^*} \right).
	\end{array} \right.
	\end{equation}
\end{definition}
We adopt the backward induction technique to obtain the Stackelberg game solution. Firstly, given a strategy of the leader (i.e., the ISP), a unique optimal solution of the follower (i.e., the ESP) can be obtained straightforwardly in a closed-form since the follower's utility is a quadratic function:
\begin{equation}
\label{eq:followerSol}
	{P_S^{*}} = \frac{{{p_l} - b_m}}{{2a_m}}.
\end{equation}
Then, given the optimal transmission power $P_S^{*}$ of the follower, the leader payoff function can be rewritten as in~\eqref{eq: rewritten_leader_func}.
\begin{figure*}[!]
	\begin{equation}
	\begin{aligned}
	\label{eq: rewritten_leader_func}
	{{\bm{U}}_L}\! \left( {{p_l}}, \beta, \bm{\psi} \right) &= {p_r}\Biggl\{ {\sum\limits_{p = 1}^P {{\Omega_B}{\theta_p}{\log _2}\!\left(\!\!{1 \!+\! {\kappa _p}\frac{{\left( {{p_l} - {b_m}} \right)}}{{2{a_m}}} } \!\right)} \!+\!\sum\limits_{a = 1}^A\! {\Omega_D}{{\nu _a}} {{\log }_2}\!\left[\! {1 \!+\! {\delta _a}\frac{{\beta\left( {{p_l} - {b_m}} \!\right)}}{{2{\nu _a}{a_m}}}} \!\right]}\\
	& \quad \quad	{+ \sum\limits_{h = 1}^H \! \left[ \!{{{\Omega_B}{\tau_h}{\log _2}\!\left(\!\!{1 \!+\! {\kappa _h}\frac{{\left( {{p_l} \!-\! {b_m}} \right)}}{{2{a_m}}} }\! \!\right)} \!+\! {\Omega_D}{\mu _h}{{\log }_2} \!\left(\!\! {1 \!+\! {\delta _h}\frac{{( \beta  \!-\! {\tau _h})\!\left( {{p_l} \!-\! {b_m}} \right)}}{{2{\mu _h}{a_m}}}} \!\right)} \!\!\right]}\! \!\Biggr\} \!-\! \frac{{{p_l}\beta\!\left( {{p_l} \!-\! {b_m}} \right)}}{{2{a_m}}}. \quad \quad\quad \quad
	\end{aligned}
	\end{equation}
		\hrulefill
\end{figure*}
The profit maximization problem for the leader is expressed as follows:
\begin{subequations}\label{opt1:main}
	\begin{align}
	&\mathop {\max }\limits_{\left( {{p_l}, \beta, \bm{\psi} } \right)}{{\bm{U}}_L} \!\left( {{p_l}, \beta}, \bm{\psi} \right), \tag{\ref{opt1:main}}\\
	\text{s.t.} \; & 0 \le P_S \le P_{S}^{max}, \label{opt1:a} \\
	& P_i^{min} \le P_i^t \le P_i^{max}, i \in \left\{ {a,h} \right\}, \label{opt1:b}\\
	& E_i^{min} \le E_i \le {E_i^{max}}, i \in \left\{ {a,h} \right\},\label{opt1:c}\\
	& \gamma_i^{bb} \ge \gamma_i^{\min }, i \in \left\{ {p,h} \right\}, \label{opt1:d}\\
	& 0 \le \sum\nolimits_{p = 1}^P {\theta _p} \!+\! \sum\nolimits_{h = 1}^H {\tau _h} \le \beta \le 1, \forall {\theta _p}, \forall {\tau _h} \!\ge\! 0, \label{opt1:e}\\
	&  0 \!\le\!\! \sum\nolimits_{a = 1}^A\! {\nu _a}  \!+\! \sum\nolimits_{h = 1}^H \!{\mu _h} \!\le\! 1 \!-\! \beta \!\le\! 1, \forall {\nu _a}, \forall {\mu _h} \!\ge\! 0.  \label{opt1:f}
	\end{align}
\end{subequations}
where the constraint~\eqref{opt1:a} specifies that the transmission power of the PB, i.e., ${P_S} = \frac{{\left( {{p_l} - {b_m}} \right)}}{{2{a_m}}}$ must satisfy the FCC Rules~\cite{FCC_rules} for unlicensed wireless equipment operating in the industrial, scientific, and medical (ISM) bands. For the IoT devices, the constraint~\eqref{opt1:b} ensures that the transmission power of AWPDs and HWPDs, i.e., $P_a^t = \frac{{{\varphi _a}{g_{c,a}}\beta \left( {{p_l} - {b_m}} \right)}}{{2a_m}{\nu_a}}$ and $P_h^t = \frac{{{\varphi _h}{g_{c,h}}\left( {\beta  - {\tau _h}} \right)\left( {{p_l} - {b_m}} \right)}}{{2{a_m}{\mu_h}}}$, respectively, are sufficient for active communications to the IoT gateway as well as under a threshold. The total energy harvested by the AWPDs and HWPDs in the emitting period of the PB, i.e., ${E_a} = \frac{{{\varphi _a}{g_{c,a}}\beta \left( {{p_l} - {b_m}} \right)}}{{2{a_m}}}$ and ${E_h} = \frac{{{\varphi _h}{g_{c,h}}\left( {\beta  - {\tau _h}} \right)\left( {{p_l} - {b_m}} \right)}}{{2{a_m}}}$, respectively, must be sufficient for their operations, as well as not exceed the capacity of their batteries as represented in the constraints~\eqref{opt1:c}. Furthermore, the \emph{signal-to-noise ratio} (SNR) at the gateway received from PWPDs and HWPDs by backscatter communications, i.e., $\gamma _p^{bb} = \frac{{{\kappa _p}\left( {{p_l} - {b_m}} \right)}}{{2{a_m}}} and \gamma _h^{bb} = \frac{{{\kappa _h}\left( {{p_l} - {b_m}} \right)}}{{2{a_m}}}$, respectively, must satisfy the constraints~\eqref{opt1:d} to guarantee \textit{bit-error-rate} lower than or equal to $10^{-2}$~\cite{Kimionis2004}. Finally, the constraints~\eqref{opt1:e}-\eqref{opt1:f} are time constraints to impose IoT devices working on the proper periods. In particular, the PWPDs and HWPDs must backscatter RF signals in the emitting period and the AWPDs and HWPDs must perform active transmissions in the idle period of the PB.

We then state the existence of a SE for the proposed Stackelberg game in the following Theorem.
\begin{theorem}
	\label{theorem: SE_existance}
	There exists at least a SE $\left(P_S^*, \bm{\chi}^*\right)$ for the proposed Stackelberg game satisfying the Definition~\ref{definition: SEpoint} where ${P_S^{*}}$ is obtained in~\eqref{eq:followerSol} and $\bm{\chi}^* \buildrel \Delta \over = ({p_l^*}, \beta^*, \bm{\psi}^*)$ is the globally optimal solution of the maximization problem~\eqref{opt1:main}.
\end{theorem}
\begin{proof}
	See the Appendix~\ref{App:Theorem1}.
\end{proof}

However, the problem~\eqref{opt1:main} is non-concave due to its non-convex feasible set. Specifically, the constraint~\eqref{opt1:c} is non-convex w.r.t. $\left(p_l, \beta\right)$ (due to its below negative Hessian).
\begin{equation}
\label{eq:hessianMatrix}
	\textbf{M} = \left[ {\begin{array}{*{20}{c}}
		{0}&{1}\\
		{1}&{0}
		\end{array}} \right].
\end{equation}
%Thus, we have the determinant of this Hessian matrix $\det({\textbf{M}}) = - 1 < 0, \forall p_l, \forall \beta$. It means the constraint~\eqref{opt1:c} is neither convex nor concave w.r.t. $\left(p_l, \beta\right)$. Moreover, variables in the objective function~\eqref{eq: rewritten_leader_func} and the constraint~\eqref{opt1:b} are strongly coupled. Hence, the problem~\eqref{opt1:main} is a non-concave problem.
%Thus, it is costly computational complexity to find the globally optimal solution.

Moreover, variables in the objective function~\eqref{eq: rewritten_leader_func} and the constraint~\eqref{opt1:b} of the non-concave problem \eqref{opt1:main} are strongly coupled. 
To tackle it, in the next section, we introduce low-complexity iterative algorithms using the BCD technique to obtain the locally optimal solution for the profit optimization problem of the ISP. 

Note that the game with non-concave utility-maximization problem is often referred to as non-convex or non-concave game that is challenging. In this case, one tends to relax the equilibrium concept to quasi-equilibrium~\cite{Siyari2017}-\cite{Scutari2013}. In our case, a quasi-SE (QSE) can be defined as a solution of a variational inequality~\cite{Facchinei2007} equivalent-problem obtained under the \textit{Karush–Kuhn–Tucker} (K.K.T.) optimality conditions of the non-cave problem~\eqref{opt1:main}. However, in our work, we adopt the concept of local SE that is defined as follows~\cite{Daskalakis2018}-\cite{Mazumdar2018}:
\begin{definition}
	\label{definition:localSE}
	Let $S_{\bm{\chi}}$ be the feasible region shaped by the constraints~\eqref{opt1:a}-\eqref{opt1:f}. A pair $\left({P_S^*, \bm{\hat \chi}^*}\right)$ is a local SE of the proposed Stackelberg game if there exists a neighborhood $\hat S_{\bm{\chi}}$ around $\bm{\hat \chi}^*$ so that for all $\bm{\chi} \in \hat S_{\bm{\chi}} \subset S_{\bm{\chi}}$, we have:
	\begin{equation}
	{\bm{U}_L}\left( {{\bm{\hat \chi} ^*}}, {P_S^*} \right) \ge {\bm{U}_L}\left( {\bm{\chi} ,P_S^*} \right).	
	\end{equation}
\end{definition}
It is worth noting that the concept of local SE above is stronger than the concept of QSE as a local SE is a QSE but a QSE is not always a local SE.% It is due to that at local SE, each player has obtained the locally optimal solution in its neighborhood for his optimization problem. This means that the locally optimal solution satisfies the K.K.T conditions. Thus, a local SE must be a QSE.

%%%%%%%%%%%%%%%%%%%%%%%%%%%%%%%%%%%%%%%%%%%%%%%%%%%%%%%%%%%%%%%%%%%%%%%%%%%%%%%%%%%%%
\section{Iterative Algorithms to Find the local Stackelberg Equilibrium}
\label{section:IterativeAlgorithms}
In this section, to find the local SE, we propose two partitioning schemes, i.e., the PA and JA which employ the BCD and CCCP techniques to address the non-concave optimization problem (\ref{opt1:main}). The idea is to decompose the original problem into sub-problems that are concave and can be effectively solved in each iteration. The JA scheme can outperform the PA scheme in maximizing the profit of the ISP, while the PA scheme requires less computational resources than the JA scheme.% Thus, both the proposed schemes presented in the following subsections are expected to be effective tools to address the original problem efficiently. 

%#############################################################
\subsection{PA Scheme}

This scheme performs an iterative algorithm to partition the variable tuple $\bm{\chi}$ into 3 different blocks of variables, i.e., the energy price $ {p_l}$, the emitting time ${\beta}$, and the scheduling times ${\bm{\psi}} \buildrel \Delta \over =  (\bm{\theta}, \bm{\tau}, \bm{\nu}, \bm{\mu})$. At each iteration, we (i) optimize the energy price ${p_l}^{(n)}$ from the last optimal output $\{{p_l^{(n-1)}},{\beta}^{(n-1)}, {\bm{\psi}}^{(n-1)}\}$; (ii) obtain the emitting time of the PB ${\beta}^{(n)}$ by keeping the $\{{p_l}^{(n)}, {\bm{\psi}}^{(n-1)}\}$ fixed; (iii) and find the optimal scheduling times ${\bm{\psi}}^{(n)}$ of the IoT devices with the fixed ${p_l}^{(n)}$ and ${\beta}^{(n)}$. 
These steps are described in detail as follows:
\subsubsection{Optimal Energy Price Offered for the PB}
In the first step of the algorithm loop, we obtain the optimal requested price $p_l$ based on the optimal solution from the previous step $\{{p_l^{(n-1)}}, {\beta^{(n-1)}}, {{\bm{\psi}}^{(n-1)}}\}$. Note that the time constraints in~\eqref{opt1:main} are eliminated because the time variables are constant and set by the previous optimal vector $\bm{\psi}^{(n-1)}$. Then, the original optimization problem in \eqref{opt1:main} can be transformed into:
\vspace{-2pt}
\begin{subequations}\label{subopt1:main}
	\begin{alignat}{3}
	&\mathop {\max }  \limits_{{p_l}} { G} ({{p_l}} ) , \tag{\ref{subopt1:main}}\\
	\text{s.t.} \quad & 0 \le {p_l} - {b_m} \le 2{a_m}{P_S^{\max }} , \label{subopt1:a} \\
	& P_a^{min} \le {{\frac{{\varphi_a}{g_{c,a}}{\beta^{(\!n-1\!)}}}{2{\nu_a^{(\!n-1\!)}}{a_m}}}\left( {{p_l} - {b_m}} \right)} \le P_a^{max}, \label{subopt1:b}\\
	& P_h^{min} \!\le\! {{\frac{{\!{\varphi_h}{g_{c,h}}\!\!\left(\!\beta^{(\!n-1\!)} \!-\! \tau_h^{(\!n-1\!)}\!\!\right)}\!}{2{\mu_h^{(\!n-1\!)}}{a_m}}}\!\left( {{p_l} \!-\! {b_m}} \!\right)} \!\le\! P_h^{max}, \label{subopt1:c}\\
	& E_a^{min} \le {{\frac{{\varphi_a}{g_{c,a}}{\beta^{(\!n-1\!)}}}{2{a_m}}}\left( {{p_l} - {b_m}} \right)} \le E_a^{max},\label{subopt1:d}\\
	& E_h^{min} \!\le\! {{\frac{{\!{\varphi_h}{g_{c,h}}\!\!\left(\!\beta^{(\!n-1\!)} \!-\! \tau_h^{(\!n-1\!)}\!\right)}\!}{2{a_m}}}\!\!\left( {{p_l} \!-\! {b_m}} \!\right)} \!\le\! E_h^{max}, \label{subopt1:e} \\
	& {{\kappa_p}}\!\left( {{p_l} - {b_m}} \right) \ge {2{a_m}}\gamma_p^{\min }, \label{subopt1:f}\\
	& {{\kappa_h}}\!\left( {{p_l} - {b_m}} \right) \ge {2{a_m}}\gamma _h^{\min }, \label{subopt1:g}
	\end{alignat}	
\end{subequations}
where ${G}{\left(p_l\right)}$ is expressed in~\eqref{eq: G_1}
\begin{figure*}
\begin{equation}
\begin{aligned}
\label{eq: G_1}
{ G} \left( {{p_l}}\right) &= {\sum\limits_{p = 1}^P\!  {{{c_1}{\theta_p^{(n-1)}}}{\log _2}\! \left[ {1  \!+\!  {\kappa_p}{\frac{{\left( {{p_l}  \!-\!  {b_m}} \! \right)}}{2{a_m}}}}  \right]} } \!+\! \sum\limits_{a = 1}^A \! {{c_2}{\nu_a^{(\!n-1\!)}}}{{\log }_2} \!\! \left[ {\!1  \!+\!  {\delta_a}\frac{{\beta^{(\!n-1\!)}}\!{\left({{p_l}\!-\!{b_m}} \! \right)}}{2{\nu_a^{(\!n-1\!)}}{a_m}}} \! \right]\\
& \quad \!+\! \sum\limits_{h = 1}^H  \! \!\left\{ \! {{{c_1}{\tau_h^{(\!n-1\!)}}{\log _2}\!\! \left[\!{1 \!+\! {\kappa_h}\frac{{\left( {{p_l} \!- \!{b_m}} \!\right)}}{2{a_m}} } \! \right]}}\! \!+\! {c_2}{\mu_h^{(n-1)}}{{\log }_2} \!\! \left[ \!{1 \!+\! {\delta_h}\frac{\!{\left(\!\beta^{(\!n-1\!)} \!-\! \tau_h^{(\!n-1\!)}\!\right)}\!{\left( {{p_l} \!- \!{b_m}} \!\right)}}{2{\mu_h^{(\!n-1\!)}}{a_m}}} \! \right] \!\! \right\} \!-\!  \frac{{p_l}{\beta^{(\!n-1\!)}}\!{{\left( {{p_l} \!-\! {b_m}} \right)}}}{2{a_m}}, 
\end{aligned}
\end{equation}
\hrulefill
\end{figure*}
and
${c}_{1} \!=\! {p_r}\Omega_B$,
${c_{2}} \!=\! {p_r}{\Omega_D}$.
%${c}_{3} \!=\! \frac{{\left( {{\beta ^{(n - 1)}} - \tau _h^{(n - 1)}} \right)}}{{2{a_m}}}.$

\begin{lemma}
	\label{lemma: convex_proof_subopt1}
	The objective function $G$ is a concave function w.r.t. $p_l$ satisfying the linear constraints in \eqref{subopt1:a}-\eqref{subopt1:g}, and the optimal solution for the single variable sub-problem \eqref{subopt1:main} can be obtained by line search methods.
\end{lemma}

\begin{proof}
	The function ${ G} {(p_l)}$ is a sum of logarithmic functions of $p_l$ which has the form of $\log_2({a_t}x+b_t)$ and a quadratic function ${ f} {(p_l\!)} = -\frac{{p_l}{\beta^{(\!n-1\!)}}\!{\left(p_l - b_m\!\right)}}{2{a_m}}$. Intuitively, the logarithmic function $\log_2({a_t}x+b_t)$ is a concave function w.r.t. $x$. Furthermore, the quadratic function ${ f} {(p_l)}$ is also a concave function. Thus, the objective function ${ G}$ is a concave function w.r.t. $p_l$. Since the sub-problem~\eqref{subopt1:main} is a single variable optimization problem which can be solved efficiently by using the line search methods such as the golden section or parabolic interpolation methods~\cite{Lee2016Sum}.
\end{proof}

%#############################################################
\subsubsection{Optimal Emitting Time of the PB}
The optimal emitting time $\beta$ of the PB in the \textit{n-}th iteration can be obtained in the second step by solving the following sub-problem:
\vspace{-5pt}
\begin{subequations}\label{subopt2:main}
\begin{alignat}{3}
&\mathop {\max }\limits_{{\beta}} {{\hat G}} \left( \beta \right), \tag{\ref{subopt2:main}}\\
\text{s.t.} \quad & 0 \le {\beta} \le 1, \label{subopt2:a} \\
& P_a^{min} \le \frac{{c_3}{\varphi_a}{g_{c,a}}{\beta}}{\nu_a^{(\!n-1\!)}} \le P_a^{max},  \label{subopt2:b} \\
& P_h^{min} \le \frac{{c_3}{\varphi_h}{g_{c,h}}{\left({\beta - \tau_h^{(\!n-1\!)}}\!\right)}}{\mu_h^{(\!n-1\!)}} \le P_h^{max}, \label{subopt2:c}\\
& E_a^{min} \le {{c_3}{\varphi_a}{g_{c,a}}{\beta}} \le E_a^{max},\label{subopt2:d} \\
& E_h^{min} \le {{c_3}{\varphi_h}{g_{c,h}}{\left({\beta - \tau_h^{(\!n-1\!)}}\!\right)}} \le E_h^{max}, \label{subopt2:e}
\end{alignat}	
\end{subequations}
where
%\vspace{-5pt}
\begin{equation}
\begin{aligned}
&{\hat G} \! \left( \beta  \right) \!=\! {\sum\limits_{a = 1}^A \!{c_2}{\nu_a^{(n-1)}}{{\log }_2}\!\left[ \!{1 \!+\! \frac{{c_3}{\delta_a}{\beta}}{{\nu_a^{(n-1)}}} } \right]} \\
&  \!+\!\! {\sum\limits_{h = 1}^H\! {c_2}{\mu_h^{(\!n-1\!)}}{{\log }_2}\!\!\left[ \!{1 \!+\! \frac{{c_3}{\delta_h}\!{\left(\!{\beta\!-\!\tau_h^{(\!n-1\!)}}\!\right)}}{{\mu_h^{(n-1)}}}}\!\!\right]} \!\!-\! {c_3}{\beta}{p_l^{(\!n\!)}} \!+\! { C},
\end{aligned}
\end{equation}
\begin{equation}
\begin{aligned}
{C} \!=\!\!\! {\sum\limits_{p = 1}^P \!{c_1}{\theta_p^{\left(\!{n - 1}\!\right)}\!{{\log }_2}\!\left[{1 \!+\! {c_3}{\kappa _p}} \right]}} \!\!+\!\!\! {\sum\limits_{h = 1}^H \!\!{c_1}{\tau _h^{\left( \!{n - 1} \!\right)}\!{{\log }_2}\!\left[ {1 \!+\! {c_3}{\kappa _h}}\right]}}, 
\end{aligned}
\end{equation}
and ${c}_{3} \!=\! \frac{{\left(\! {p_l^{(n)} - {b_m}} \!\right)}}{{2{a_m}}}.$ 
%${\hat c}_{h,5} \!=\! {\tau_h^{(n-1)}}$,
%${\hat c_6}     \!=\! \frac{{p_l^{\left( n \right)}\!\left( \!{p_l^{\left( n \right)} - {b_m}} \!\right)}}{{2{a_m}}}$,
%${\hat r}_{a,1} \!=\! \frac{{{\hat c}_{a,2}{N_0^a}}}{{\zeta {g_{d,a}}}}$,
%${\hat r}_{h,2} \!=\! \frac{{{\hat c}_{h,4}{N_0^h}}}{{\zeta {g_{d,h}}}}$,  $(\forall a \in \mathcal{A}, \forall h \in \mathcal{H})$.

Similar to the sub-problem~\eqref{subopt1:main}, the transmission power constraint of the PB, the time constraints of all IoT devices, and the SNR constraints of backscatter devices are always satisfied with the fixed $\{p_l^{(\!n\!)}, {\bm{\psi}}^{(\!n-1\!)}\!\}$, and thus they can be omitted.
\begin{lemma}
	\label{lemma: convex_proof_subopt2}
	The objective function $\hat{G}$ is a concave function w.r.t. $\beta$ satisfying the linear constraints in \eqref{subopt2:a}-\eqref{subopt2:e}, and the optimal solution for the single variable sub-problem \eqref{subopt2:main} can be obtained by line search methods.
\end{lemma}
\begin{proof}
Following the proof of the Lemma~\ref{lemma: convex_proof_subopt1}, the function ${\hat G} {(\beta)}$ is contributed by logarithmic functions forming as $\log_2\!{(\!a_t{x} \!+\! b_t\!)}$, a linear function ${\hat f}{(\beta)} \!=\! -{ c_3}{\beta}{p_l^{(\!n\!)}}$, and a constant $C$. The logarithmic function $\log_2\!{(\!a_t{x} \!+\! b_t\!)}$ is also concave w.r.t. $x$. Thus, the objective function ${\hat G}$ is concave w.r.t. $\beta$. Therefore, the optimal solution of the single variable sub-problem \eqref{subopt2:main} can be also found efficiently by line search methods.
\end{proof}

%#############################################################

\subsubsection{Optimal Time Resource Allocation}
In the third step, we investigate the time scheduling $\bm{\psi}^{(n)}$ based on the given $\{p_l^{(n)}, \beta^{(n)}\}$. The original optimization problem \eqref{opt1:main} is simplified as:
\begin{subequations}\label{subopt3:main}
	\begin{alignat}{3}
	&\mathop {\max }\limits_{\bm{\psi}} {\tilde G}\left( {\bm{\psi}} \right), \tag{\ref{subopt3:main}}\\
	\text{s.t.} \;  
	& {P_a^{min}} \le \frac {{c_3}{\varphi_a}{g_{c,a}}{\beta^{(\!n\!)}}}{\nu_a} \le {P_a^{max}}, \label{subopt3:a}\\ 
	& {P_h^{min}} \le \frac{{c_3}{\varphi_h}{g_{c,h}}{\left(\beta^{(\!n\!)} - \tau_h\right)}}{\mu_h} \le {P_h^{max}}, \label{subopt3:b}\\
	& {E_h^{min}} \le {{c_3}{\varphi_h}{g_{c,h}}{\left(\!\beta^{(\!n\!)} - \tau_h\!\right)}} \le {E_h^{max}}, \label{subopt3:c}\\
	& 0 \!\le\! \sum\nolimits_{p = 1}^P \!{\theta _p} \!+\! \sum\nolimits_{h = 1}^H \!{\tau _h} \!\le\! 1- \beta^{(\!n\!)},\forall {\theta _p}, {\tau _h} \!\ge\! 0, \label{subopt3:d}\\
	& 0 \!\le\! \sum\nolimits_{a = 1}^A\! {\nu _a}  \!+\! \sum\nolimits_{h = 1}^H \!{\mu _h} \!\le\! \beta^{(\!n\!)}, \forall {\nu _a}, {\mu _h} \!\ge\! 0,  \label{subopt3:e}
	\end{alignat}
\end{subequations}
where
\begin{equation}
\label{eq: funcG3}
\begin{aligned}
&{\tilde G}\!\left( {\bm{\psi}}  \right) \!=\!\! \sum\limits_{p = 1}^P\! {c_1} {\theta _p}{\log_2 \left(1 + {c_3}{\kappa_p}\right)} \!+\! \sum\limits_{a = 1}^A  \!{c_2}{\nu _a}{\log _2}\!\!\left(\!\! {1\! + \!\frac{{c_3}{\delta_a}{\beta^{(\!n\!)}}}{{{\nu _a}}}} \!\right) \\
&\quad \!\!+\!\! \sum\limits_{h = 1}^H\!\! {\left[\! {{c_1}{\tau _h}{\log_2 \!\left(1 \!+\! {c_3}{\kappa_h}\right)} \!+ \!{c_2}{\mu _h}{{\log }_2}\!\!\left(\! \!{1\! + \!\frac{{c_3}{\delta_h}{\left(\beta^{(\!n\!)} \!-\! \tau_h\!\right)}}{\mu _h}}\!\!\! \right)} \!\!\right]} \\
& \quad \! -{c_3}{p_l^{(n)}}{\beta^{(n)}}.
\end{aligned}
\end{equation}
It can be observed that the SNR constraints of backscatter devices, i.e., PWPDs and HWPDs, as well as the energy constraints for AWPDs and HWPDs are removed as they are always satisfied with the fixed $\{p_l^{(n)}, \beta^{(n)}\}.$

To obtain the optimal solution for the sub-problem \eqref{subopt3:main}, we have the following Lemma.
\begin{lemma}
	\label{lemma: convex_proof_subopt3}
	The objective function $G_{3}$ is a concave function w.r.t. $\bm{\psi}$ satisfying the linear constraints in \eqref{subopt3:a}-\eqref{subopt3:e}, and the optimal solution for the sub-problem~\eqref{subopt3:main} can be obtained by the interior-point method.
\end{lemma}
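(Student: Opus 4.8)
The plan is to show that problem~\eqref{subopt3:main} is a convex program --- a concave objective maximized over a convex feasible set --- after which the interior-point method applies immediately, together with its polynomial-time guarantee and the sufficiency of the KKT conditions.

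First I would decompose $\tilde{G}$ in~\eqref{eq: funcG3} into three groups of terms: (i) the affine terms $\sum_{p}\tilde{c}_{p,1}\theta_p$ and $\sum_{h}\tilde{c}_{h,3}\tau_h$ together with the constant $\tilde{C}$, which are trivially concave; (ii) the AWPD terms $p_r\Omega_D\,\nu_a\log_2(1+\tilde{c}_{a,2}/\nu_a)$; and (iii) the HWPD terms $p_r\Omega_D\,\mu_h\log_2\!\bigl(1+(\tilde{c}_{h,4}-\tilde{c}_{h,5}\tau_h)/\mu_h\bigr)$. I would first record that the coefficients $\tilde{c}_{a,2},\tilde{c}_{h,4},\tilde{c}_{h,5}$ are all positive, since $p_l^{(n)}-b_m\ge 0$ by~\eqref{subopt1:a} and $\delta_a,\delta_h,\beta^{(n)},a_m>0$, and that on the feasible set $\nu_a,\mu_h>0$ while $\tilde{c}_{h,4}-\tilde{c}_{h,5}\tau_h=\tilde{c}_{h,5}(\beta^{(n)}-\tau_h)\ge 0$, so every logarithm argument is well defined and $\ge 1$.

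The key step is to recognize each term in groups (ii) and (iii) as a perspective of the scalar concave function $f(x)=\log_2(1+x)$. Recall that for concave $f$ the perspective $g(t,x)=t\,f(x/t)$ is jointly concave on $\{t>0\}$. Hence $\nu_a\log_2(1+\tilde{c}_{a,2}/\nu_a)=g(\nu_a,\tilde{c}_{a,2})$ is concave in $\nu_a$, and $g(\mu_h,y_h)=\mu_h\log_2(1+y_h/\mu_h)$ is jointly concave in $(\mu_h,y_h)$; composing with the affine map $\tau_h\mapsto y_h=\tilde{c}_{h,4}-\tilde{c}_{h,5}\tau_h$ preserves joint concavity in $(\mu_h,\tau_h)$. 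Summing over $a$ and $h$ and adding the affine pieces shows $\tilde{G}$ is concave in $\bm{\psi}=(\bm{\theta},\bm{\tau},\bm{\nu},\bm{\mu})$. If an elementary route is preferred, the same conclusion follows by checking that the $2\times 2$ Hessian of $g(\mu,y)$ is negative semidefinite (equivalently, for fixed $c>0$, $h(\nu)=\nu\log(1+c/\nu)$ has $h''(\nu)=-\tfrac{c}{\nu^{2}}\cdot\tfrac{c/\nu}{(1+c/\nu)^{2}}<0$), but the perspective argument avoids that computation.

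Next I would verify that the feasible region is convex. Since $\nu_a>0$ and $\mu_h>0$ on the domain, constraints~\eqref{subopt3:a}--\eqref{subopt3:c} can be cleared of denominators --- e.g. $P_h^{\min}\mu_h\le \tilde{r}_{h,2}(\tilde{c}_{h,4}-\tilde{r}_{h,5}\tau_h)\le P_h^{\max}\mu_h$ --- each becoming a pair of linear inequalities in $(\tau_h,\mu_h)$, while~\eqref{subopt3:d}--\eqref{subopt3:e} are already linear; thus the feasible set is a polyhedron. Maximizing the concave $\tilde{G}$ over it is therefore a convex problem: Slater's condition holds at any strictly feasible point, so the KKT conditions are necessary and sufficient, and the interior-point method returns the global optimum to accuracy $\varepsilon$ in $O(\mathrm{poly}\log(1/\varepsilon))$ iterations. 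The main obstacle is purely the concavity verification of the coupled term in group (iii): one must be careful that the perspective is taken with respect to $\mu_h$ while $\tau_h$ enters only through the affine numerator, and that the domain restrictions ($\mu_h>0$, argument $\ge 1$) are respected on the feasible set; everything else is bookkeeping.
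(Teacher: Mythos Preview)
Your argument is correct and essentially parallels the paper's own proof, with one methodological difference worth noting. The paper (Appendix~\ref{App:lemma3}) uses the same decomposition into the linear $\theta_p$-terms, the $\nu_a$-terms, the coupled $(\tau_h,\mu_h)$-terms, and the constant $\tilde C$, but establishes concavity of the latter two by appealing directly to their Hessian matrices. You instead invoke the perspective transformation of $f(x)=\log_2(1+x)$ together with affine precomposition in $\tau_h$; this is cleaner, avoids any explicit second-derivative bookkeeping, and makes the joint concavity in $(\mu_h,\tau_h)$ immediate rather than a $2\times 2$ computation. You also spell out explicitly that clearing the positive denominators in~\eqref{subopt3:a}--\eqref{subopt3:b} yields linear inequalities, hence a polyhedral feasible set---a point the paper leaves implicit. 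Either route is valid; yours is the more structural one and gives the same conclusion with less calculation.
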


\begin{proof}
	See Appendix~\ref{App:lemma3}.
\end{proof}

%#############################################################
%\subsubsection{The overall iterative algorithm for the PA scheme}
%======================================================================================================
\begin{algorithm}[t]
	\caption{The iterative algorithm for the PA scheme.}
	\label{algorithm1}
	\begin{algorithmic}[1]
		\State \textbf{Input:} The previous output $\{{p_l}^{(n-1)}, {\beta}^{(n-1)}, {\bm{\psi}}^{(n-1)}\}$.
		\State \textbf{Initialize:} $n = 1$, $\{{p_l}^{(0)}, {\beta}^{(0)}, {\bm{\psi}^{(0)}}\}$, tolerance $\xi_1 > 0$.
		\State \textbf{Compute:} the leader's utility ${\bm{U}_L}\left( {{{p_l}^{\left( 0 \right)}},{{\beta}^{\left( 0 \right)}}}, {\bm{\psi}}^{(0)} \right)$.
		\State \textbf{Repeat:}
		\State \quad {Obtain ${p_l}^{(n)}$ for given $\{{p_l}^{(n-1)}, {\beta}^{(n-1)}, {\bm{\psi}}^{(n-1)}\}$ by \hspace*{2.53mm} solving \eqref{subopt1:main};}
		\State \quad {Derive the optimal value ${\beta}^{(n)}$ with fixed \hspace*{2.53mm} $\{{p_l}^{(n)}, \bm{\psi}^{(n-1)}\}$ by solving \eqref{subopt2:main};}
		\State \quad {For given $\{{p_l}^{(\!n\!)}, {\beta}^{(\!n\!)}\},{\bm{\psi}}^{(\!n\!)}$ is obtained by solving \eqref{subopt3:main};}
		\State \quad {\textbf{If:}} 
		\State{\quad\quad $\left| {{\bm{U}_L}\!\left({{{p_l}^{\left(\!n\!\right)}}\!, {{\beta}^{\left(\!n\!\right)}}\!, {{\bm{\psi}}^{\left(\!n\!\right)}}}\!\right) \!-\! {\bm{U}_L}\!\left({{{p_l}^{\left(\!{n \!-\! 1}\!\right)}}\!, {{\beta}^{\left(\! {n \!-\! 1}\!\right)}}\!, {{\bm{\psi}}^{\left(\!n \!-\! 1 \right)}}}\! \right)}\!\right| \!<\! {\xi_1}$;} 
		\State \quad {\textbf{Then:}} 
		\State \quad\quad {Set $\{{\hat p_l}^*, \hat{\beta}^*, {\bm{\hat \psi}}^*\} \!=\! \{{p_l}^{(\!n\!)}, {\beta}^{(\!n\!)}, {\bm{\psi}}^{(\!n\!)}\}$ and terminate.}
		\State{\quad \textbf{Otherwise:}} 
		\State \quad \quad {Update $n \leftarrow  n + 1$} and continue.
		\State {\textbf{Output:}} The locally optimal solution $\bm{\hat \chi^*} = \{\hat {p_l}^*, \hat{\beta}^*, {\bm{\hat\psi}}^*\}$.
	\end{algorithmic}
\end{algorithm}
%======================================================================================================

The overall proposed iterative algorithm is summarized in \textbf{Algorithm~\ref{algorithm1}}. The convergence of the proposed iterative algorithm is formally stated in the following theorem.
\begin{theorem}
	\label{theorem: convergence_complexity_BCD}
Algorithm~\ref{algorithm1} converges to a locally optimal solution of the Leader maximization problem.  
\end{theorem}

\begin{proof}
	See Appendix~\ref{App: BCD_proof}.
\end{proof}

%%#############################################################
\subsection{JA Scheme}
For the JA scheme, we first perform a joint optimization of the energy price and service time for the PB due to their trade-off relationship. The time scheduling for the IoT devices is then obtained given the optimal value of $\{{p_l}, {\beta}\}$ by solving the problem~\eqref{subopt3:main}.

\subsubsection{Joint Optimal Energy Price and Service Time}
With a given tuple $\{{p_l}^{(n-1)}, {\beta}^{(n-1)}, {\bm{\psi}}^{(n-1)}\}$ from the previous output, the optimal energy price and service time are the solutions of the following sub-problem:
\begin{subequations}\label{subopt4:main}
	\begin{alignat}{3}
	&\mathop {\max }\limits_{{p_l},{\beta}} {Q} \left( {p_l}, {\beta} \right), \tag{\ref{subopt4:main}}\\
	\text{s.t.} \quad & 0 \le {\beta} \le 1, \label{subopt4:a} \\
	& 0 \le {p_l} - {b_m} \le 2{a_m}{P_S^{\max }}, \label{subopt4:b} \\
	& {P_a^{min}} \le {\frac{{\varphi_a}{g_{c,a}}{\beta}}{2{\nu_a^{(\!n-1\!)}}{a_m}}{\left({p_l - b_m}\!\right)}} \le {P_a^{max}}, \label{subopt4:c} \\
	& {P_h^{min}} \!\le\! {\frac{{\varphi_h}{g_{c,h}}{\left(\!\beta - \tau_h^{(\!n-1\!)}\!\right)}}{2{\mu_h^{(\!n-1\!)}}\!{a_m}}{\left({p_l - b_m}\!\right)}} \!\le\! {P_h^{max}},\label{subopt4:d} \\
	& {E_a^{min}} \!\le \frac{{\varphi_a}{g_{c,a}}{\beta}}{2{a_m}}{\left({p_l \!-\! b_m}\right)} \!\le {E_a^{max}},\label{subopt4:e} \\
	& {E_h^{min}} \!\le {\frac{{\varphi_h}{g_{c,h}}{\left(\!\beta - \tau_h^{(\!n-1\!)}\!\right)}}{2{a_m}}{\left({p_l - b_m}\!\right)}} \!\le {E_h^{max}}, \label{subopt4:f} \\
	& {\kappa_a}\!\left( {{p_l} - {b_m}} \right) \ge {2{a_m}}{\gamma_p^{min}}, \label{subopt4:g}\\
	& {\kappa_h}\!\left( {{p_l} - {b_m}} \right) \ge {2{a_m}}{\gamma_h^{min}}, \label{subopt4:h}
	\end{alignat}	
\end{subequations}
where ${Q}\left( {{p_l},\beta } \right)$ is expressed in~\eqref{eq: func_G4}.
\begin{figure*}
\begin{equation}
\begin{aligned}
\label{eq: func_G4}
{Q}\left( {{p_l},\beta } \right) &\!=\! \sum\limits_{p = 1}^P \!{{c_1}{\theta_p^{(n-1)}}{{\log }_2}\! \left[ \! {1  \!+\! \frac{\kappa_p}{2{a_m}}\!\left( {{p_l} \!-\! {b_m}} \!\right)}\!  \right]}  \!+\!\! \sum\limits_{a = 1}^A \!{{c_2}{\nu_a^{(n-1)}}{{\log }_2}\! \left[  \!{1 \! +\!  \frac{\delta_a{\beta}}{2{\nu_a^{(n-1)}}{a_m}} \!\left( {{p_l} \!-\! {b_m}} \! \right)} \! \right]} \\ 
&\quad \!+\!  \sum\limits_{h = 1}^H \! \!\left\{ \! {{c_1}{\tau_h^{(n-1)}}{{\log }_2} \!\!\left[\!{1 \!+\! \frac{\kappa_p}{2{a_m}}\!\left( {{p_l} \!-\! {b_m}}\! \right)}  \!\right] } \!+\!{{c_2}{\mu_h^{(n-1)}}{{\log }_2} \!\!\left[ \! {1  \!+\!  \frac{{\delta_h}(\beta\!-\!\tau_h^{(n-1)})}{2{\mu_h^{(n-1)}}{a_m}}\!\left( {{p_l} \!-\! {b_m}} \!\right)}\!  \right]} \!\!\right\} - \frac{{{\beta {p_l}\!\left( {{p_l} \!-\! {b_m}}  \right)}}}{2{a_m}}, \quad \quad\quad
\end{aligned}
\end{equation}
%\hrulefill
\end{figure*}

However, the sub-problem~\eqref{subopt4:main} is also non-concave due to its non-convex feasible set, i.e., the constraints~\eqref{subopt4:c}-\eqref{subopt4:f} are non-convex w.r.t. $\left(p_l, \beta\right)$. To address this problem, we linearise the product ${\beta}{p_l}\left({p_l - b_m}\right)$ by defining $q_1 = \frac{1}{2}{\left({p_l - b_m}\right)}{\left(1 + \beta\right)}$, $q_2 = \frac{1}{2}\left({p_l - b_m}\right)\left(1 - \beta\right)$, then the problem~\eqref{subopt4:main} becomes:
\begin{subequations}
\label{subopt5:main}
\vspace{10pt}
\begin{alignat}{3}
	&\mathop {\max }\limits_{{q_1, q_2}} {\hat Q} \left( q_1, q_2 \right), \tag{\ref{subopt5:main}}\\
	\text{s.t.}  &  0 \le q_2 \le q_1, \label{subopt5:a} \\
	& 0 \le q_1 + q_2 \le 2{a_m}{P_S^{max}}, \label{subopt5:b} \\
%	& \frac{q_1 - q_2}{q_1 + q_2} \ge T_{bs}^{(n-1)}, \frac{2{q_2}}{q_1 + q_2} \ge T_{at}^{(n-1)}, \label{subopt5:c} \\
	& {P_a^{min}} \!\le\! {\frac{{\varphi_a}{g_{c,a}}}{2{\nu_a^{(n-1)}}{a_m}}}{\left({q_1 \!-\! q_2}\right)} \!\le\! {P_a^{max}}\!,  \label{subopt5:c} \\
	& {P_h^{min}} \!\!\le\!\! \frac{{\varphi_h}{g_{c,h}}\!{\left[\!\!\left(\!\!1 \!-\! \tau_h^{(\!n-1\!)}\!\!\right)\!\!{q_1} \!-\! \left(\!\!1 \!+\! \tau_h^{(\!n-1\!)}\!\!\right)\!\!{q_2}\!\right]}}{2{\mu_h^{(n-1)}}{a_m}} \!\!\le\! {P_h^{max}}\!, \label{subopt5:d} \\
	& {E_a^{min}} \!\le\! {\frac{{\varphi_a}{g_{c,a}}}{2{a_m}}}{\left({q_1 \!-\! q_2}\right)} \!\le\! {E_a^{max}}\!, \label{subopt5:e} \\
	& {E_h^{min}} \!\le\!\! \frac{{\varphi_h}{g_{c,h}}\!{\left[\!\!\left(\!\!1 \!-\! \tau_h^{(\!n-1\!)}\!\!\right)\!\!{q_1} \!-\! \left(\!\!1 \!+\! \tau_h^{(\!n-1\!)}\!\!\right)\!\!{q_2}\!\right]}}{2{a_m}} \!\!\le\! {E_h^{max}}\!, \label{subopt5:f} \\
	& {\kappa_p}\left( {q_1 + q_2} \right) \ge {2{a_m}}{\gamma_p^{min}}, \label{subopt5:g}\\
	& {\kappa_h}\left( {q_1 + q_2} \right) \ge {2{a_m}}{\gamma_h^{min}}, \label{subopt5:h}
\end{alignat}	
\end{subequations}
where ${\hat Q}\!\left( {{q_1},{q_2}} \right)$ is expressed in~\eqref{eq: func_Q}.
\begin{figure*}
\begin{equation}
\begin{aligned}
\label{eq: func_Q}
&{\hat Q}\!\left( {{q_1},{q_2}} \right) \!=\!\! \sum\limits_{p = 1}^P {{c_1}{\theta_p^{(n-1)}}{{\log }_2}\left[ {1 + \frac{\kappa_p{\left( {{q_1} + {q_2}} \right)}}{2{a_m}}} \right]} \!+ \!\!\sum\limits_{a = 1}^A \!{{c_2}{\nu_a^{(\!n-1\!)}}{{\log }_2}\!\!\left[\!{1 \!+\! \frac{{\delta_a}\left( {{q_1} \!-\! {q_2}} \right)}{2{\nu_a^{(\!n-1\!)}}{a_m}}} \!\right]} \\
&\quad\!+\!\! \sum\limits_{h = 1}^H \!\!\left\{\!\! {{c_1}{\tau_h^{(\!n-1\!)}}{{\log }_2}\!\!\left[ \!{1 \!+\! \frac{{\kappa_h}\!\left(\!{{q_1} \!+\! {q_2}} \!\right)}{2{a_m}}}\! \right]} \!\!+\! {{c_2}{\mu_h^{(\!n-1\!)}}{{\log }_2}\!\!\left\{\! \!{1 \!+\! \frac{{\delta_h}\!\!\left[\!(\!1 \!-\! \tau_h^{(\!n-1\!) }){q_1} \!-\! (\!1 \!+\! \tau_h^{\left(\!n-1\!\right)}){q_2}\right]}{2{\mu_h^{(n-1)}}{a_m}}} \!\! \right\}} \!\!\right\} \!-\! \frac{\left(\!{q_1^2 \!+\! {b_m}{q_1}}\!\right)}{2{a_m}} \!+\! \frac{\left(\!{q_2^2 \!+\! {b_m}{q_2}}\!\right)}{2{a_m}},
\end{aligned}
\end{equation}
\end{figure*}
%and
%$T_{bs}^{(n-1)} = \sum\nolimits_{p = 1}^P {\theta_p^{(n-1)}} + \sum\nolimits_{h = 1}^H {\tau_h^{(n-1)}}$, $T_{at}^{(n-1)} = \sum\nolimits_{a = 1}^A {\nu_a^{(n-1)}} + \sum\nolimits_{h = 1}^H {\mu_h^{(n-1)}}$. 
Intuitively, the last term of the objective function ${\hat Q}{(q_1, q_2)}$ is convex, while other terms are concave.  We define ${V} \buildrel \Delta \over =  \{q_1, q_2\}$ and $S_V$ to be the set of $V$ satisfying~\eqref{subopt5:a}-\eqref{subopt5:h}, then the objective function of the problem~\eqref{subopt5:main} is rewritten as follows:
\begin{equation}
	{\hat Q}\left( V \right) = {Q_{ccav}}\left( V \right) + {Q_{cvex}}\left( V \right),
\end{equation}
where $Q_{ccav}\!\left( V \right)$ is expressed in~\eqref{eq: Q_ccav}, and ${Q_{cvex}} \!\left( V \right)\!=\! \frac{\left( {q_2^2 + {b_m}{q_2}} \right)}{2{a_m}}$.
\begin{figure*}[!]
\begin{equation}
\label{eq: Q_ccav}
\begin{aligned}
 {Q_{ccav}}\!\left(\!V\!\right) &= \sum\limits_{p = 1}^P {{c_1}{\theta_p^{(n-1)}}{{\log }_2}\left[ {1 + \frac{\kappa_p{\left( {{q_1} + {q_2}} \right)}}{2{a_m}}} \right]} \!+ \!\!\sum\limits_{a = 1}^A \!{{c_2}{\nu_a^{(\!n-1\!)}}{{\log }_2}\!\!\left[\!{1 \!+\! \frac{{\delta_a}\left( {{q_1} \!-\! {q_2}} \right)}{2{\nu_a^{(\!n-1\!)}}{a_m}}} \!\right]} \\
 &\quad\!+\!\! \sum\limits_{h = 1}^H \!\!\left\{\!\! {{c_1}{\tau_h^{(\!n-1\!)}}{{\log }_2}\!\!\left[ \!{1 \!+\! \frac{{\kappa_h}\!\left(\!{{q_1} \!+\! {q_2}} \!\right)}{2{a_m}}}\! \right]} \!\!+\! {{c_2}{\mu_h^{(\!n-1\!)}}{{\log }_2}\!\!\left\{\! \!{1 \!+\! \frac{{\delta_h}\!\!\left[\!(\!1 \!-\! \tau_h^{(\!n-1\!) }){q_1} \!-\! (\!1 \!+\! \tau_h^{\left(\!n-1\!\right)}){q_2}\right]}{2{\mu_h^{(n-1)}}{a_m}}} \!\! \right\}} \!\!\right\} \!-\! \frac{\left(\!{q_1^2 \!+\! {b_m}{q_1}}\!\right)}{2{a_m}}, \quad\quad
\end{aligned}
\end{equation}
	\hrulefill
\end{figure*}
%======================================================================================================
\begin{algorithm}[t]
	\caption{The CCCP algorithm to solve the DC programming problem in \eqref{subopt5:main}.}
	\label{algorithm2}
	\begin{algorithmic}[1]
		\State \textbf{Input:} The previous result of the BCD algorithm $\{{p_l^{(n-1)}, \beta^{(n-1)}, \bm{\psi}^{(n-1)}}\}$.
		\State \textbf{Initialize:} Initiate $k \!=\! 1$, a tolerance $\xi_2 \!>\! 0$, and a feasible solution $V^{(0)} \!\!=\!\! \{q_1^{(0)}(p_l^{(\!n-1\!)}\!, \beta^{(n-1)}), q_2^{(0)}(p_l^{(n-1)}\!, \beta^{(n-1)}\!)\!\}$.
		%		\State \textbf{Compute:} the leader's utility ${U_L}\!\left( {{{p_l}^{\left( 0 \right)}},{{\beta}^{\left( 0 \right)}}}, {\bm{\psi}}^{(0)} \right)$.
		\State \textbf{Repeat:}
		\State \quad {Transform~\eqref{subopt5:main} into~\eqref{eq: subopt6};}
		\State \quad {Obtain the optimal $V^{(k)}$ by solving~\eqref{eq: subopt6};} 
		%		\State \quad {For given $\{{p_l}^{(\!n\!)}\!, \!{\beta}^{(\!n\!)}\!\},{\bm{\psi}}^{(\!n\!)}$ is obtained by solving \eqref{subopt3:main};}
		\State \quad {\textbf{If:}} 
		\State{\quad\quad $ \left|  {\hat Q}\left( {{V^{\left( k \right)}}} \right) - {\hat Q}\left( {{V^{\left( {k - 1} \right)}}} \right) \right| < {\xi_2}$;} 
		\State \quad {\textbf{Then:}} 
		\State \quad\quad {Set $ V^* = V^{(k)}$ and terminate.}
		\State{\quad \textbf{Otherwise:}} 
		\State \quad \quad {Update $k \leftarrow  k + 1$} and continue.
		\State {\textbf{Output:}} The locally optimal solution $ V^* = \{{q_1}^*, {q_2}^*\}$.
	\end{algorithmic}
\end{algorithm}
%======================================================================================================
The problem~\eqref{subopt5:main} is the \textit{difference-of-convex-function} (DC) programming problem, which can be solved efficiently by the \textit{convex-concave procedure} (CCCP)~\cite{Yuille2001}. The key idea of the CCCP is to linearise the last term (i.e., a convex function) by the first-order Taylor expansion at the current fixed point. We denote ${V^{(\!k-1)}} \!\buildrel \Delta \over =  \left\{\!q_1^{(\!k-1\!)}\!, q_2^{(\!k-1\!)}\!\!\right\}$ as the fixed point at the $k$-th iteration, then the problem~\eqref{subopt5:main} can be solved by the following sequential convex programming with linear constraints~\eqref{subopt5:a}-\eqref{subopt5:h}:
\begin{equation}
	\begin{aligned}
	\label{eq: subopt6}
	{V^{\left( k \right)}} & \buildrel \Delta \over =  \argmax \limits_{V \in S} {\tilde Q}\left( V \right) \\
	&=  \argmax\limits_{V \in S} \left\{ {{Q_{ccav}}\!\left( V \right) + {V^T}\nabla {Q_{cvex}}\!\left( \!{{V^{\left( {k - 1} \right)}}} \right)}\! \!\right\},
%	\textit{subject to} \quad & \text{constraints} \eqref{subopt5:a}-\eqref{subopt5:e}
	\end{aligned}	
\end{equation}
where $\nabla {Q_{cvex}}\left( {{V^{\left( {k - 1} \right)}}} \right) = \frac{(2q_2^{\left( {k - 1} \right)} + {b_m})}{2{a_m}}$ is the gradient of $Q_{cvex}\left(V\right)$ at $V^{(k-1)}$. Ultimately, ${\tilde Q} (V)$ is a convex function, thus $V^{(k)}$ can be easily obtained by numerical methods such as the Newton or Interior-point methods. 

In general, the CCCP can start at any point within the feasible region defined by the constraints~\eqref{subopt5:a}-\eqref{subopt5:h} when it stands alone. However, we choose the initial value $V^{(0)}$ to guarantee the convergence of the outer iterative algorithm (i.e., the BCD algorithm) as follows:
\begin{equation}
	V^{(0)} \!\!=\!\! \left\{\!q_1^{(0)}\!\!\left(\!p_l^{(\!n-1\!)}\!, \beta^{(\!n-1\!)}\!\right)\!\!, q_2^{(0)}\!\!\left(\!p_l^{(\!n-1\!)}\!, \beta^{(\!n-1\!)}\!\right)\!\!\right\}.
\end{equation}
The entire procedure of the CCCP algorithm is summarized in \textbf{Algorithm~\ref{algorithm2}}.
The convergence of the CCCP algorithm is formally stated in the following Theorem:

\begin{theorem}
	\label{theorem: convergence_optimal_solution_CCCP}
Algorithm~\ref{algorithm2} (utilizing the CCCP technique to solve the joint optimization problem~\eqref{subopt5:main}) converges to a local optimum $V^{*}$ by generating a sequence of $V^{(k)}$ providing ${\hat Q}\left( {{V^{\left( k \right)}}} \right) > {\hat Q}\left( {{V^{\left( {k - 1} \right)}}} \right), \forall k \ge 1$.  
\end{theorem}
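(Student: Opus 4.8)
The plan is to recognize Algorithm~\ref{algorithm2} as an instance of the \emph{minorize--maximize} (MM) principle: the surrogate $\tilde Q$ in \eqref{eq: subopt6} is a tight lower bound of $\hat Q$ at the current iterate, and maximizing it therefore cannot decrease $\hat Q$. Concretely, I would first show the minorization property. Since $Q_{cvex}(V)=\frac{q_2^2+b_m q_2}{2a_m}$ is convex, its first-order Taylor expansion is a global underestimator, $Q_{cvex}(V)\ge Q_{cvex}(V^{(k-1)})+(V-V^{(k-1)})^{\mathrm{T}}\nabla Q_{cvex}(V^{(k-1)})$ for all $V$, with equality at $V=V^{(k-1)}$. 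Adding $Q_{ccav}(V)$ to both sides shows that $\tilde Q(V)=Q_{ccav}(V)+V^{\mathrm{T}}\nabla Q_{cvex}(V^{(k-1)})$ differs from a genuine minorant of $\hat Q$ only by an additive constant independent of $V$, and that this minorant touches $\hat Q$ exactly at $V^{(k-1)}$.

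Next I would chain the inequalities generated by the maximization step. Because $V^{(k)}=\arg\max_{V\in S}\tilde Q(V)$ while $V^{(k-1)}\in S$, we have $\tilde Q(V^{(k)})\ge\tilde Q(V^{(k-1)})$; combining this with the minorization identity gives $\hat Q(V^{(k)})\ge \tilde Q(V^{(k)})+\mathrm{const}\ge \tilde Q(V^{(k-1)})+\mathrm{const}=\hat Q(V^{(k-1)})$, which is the desired ascent. For the strict inequality claimed in the statement, I would note that the surrogate is strictly concave on $S$ --- the quadratic term $-q_1^2/(2a_m)$ together with the logarithmic terms in $q_1\pm q_2$ render its Hessian negative definite --- hence its maximizer is unique; therefore, as long as the algorithm has not yet reached a fixed point (i.e. has not terminated), one has $V^{(k)}\neq V^{(k-1)}$, which forces $\tilde Q(V^{(k)})>\tilde Q(V^{(k-1)})$ and consequently $\hat Q(V^{(k)})>\hat Q(V^{(k-1)})$.

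Finally I would establish convergence and characterize the limit. The feasible set $S$ defined by the linear constraints \eqref{subopt5:a}--\eqref{subopt5:e} (with the remaining box and SNR constraints) is closed and bounded, hence compact, and $\hat Q$ is continuous, so the monotone sequence $\{\hat Q(V^{(k)})\}$ is bounded above and converges. Extracting a convergent subsequence $V^{(k_j)}\to V^{*}$ and using continuity of $\nabla Q_{cvex}$, I would show that $V^{*}$ is a fixed point of the surrogate-maximization map, i.e. $V^{*}=\arg\max_{V\in S}\{Q_{ccav}(V)+V^{\mathrm{T}}\nabla Q_{cvex}(V^{*})\}$, and then verify that the KKT conditions of this convex program coincide with the stationarity (KKT) conditions of the original DC problem~\eqref{subopt5:main}, so that $V^{*}$ is a stationary point, i.e. a local optimum.

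The hard part will be this last step: rigorously transferring stationarity of the surrogate subproblem to the original non-convex problem. It requires a constraint qualification on $S$ (e.g. Slater's condition, which holds since $S$ is a polyhedron with nonempty interior) and a careful argument that the subsequential limit is a genuine fixed point of the MM map rather than merely a point at which $\hat Q$ has stopped increasing; the monotone-ascent part, by contrast, is essentially mechanical once the minorization inequality is in place.
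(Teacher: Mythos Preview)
Your proposal is correct and follows essentially the same route as the paper: the minorization inequality from the first-order Taylor expansion of $Q_{cvex}$, the ascent step from maximizing the concave surrogate over the compact polyhedron $S$, and the identification of the fixed point with a KKT point of the original DC program. Your treatment is in fact more careful than the paper's in two respects: you explicitly justify the \emph{strict} inequality $\hat Q(V^{(k)})>\hat Q(V^{(k-1)})$ via strict concavity and uniqueness of the surrogate maximizer (the paper only derives $\ge$), and you flag the need for a constraint qualification when transferring KKT conditions, which the paper leaves implicit.
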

\begin{proof}
	See Appendix~\ref{App: CCCP_proof}.
\end{proof}
%%#############################################################
\subsubsection{The Overall Iterative Algorithm for JA Scheme}

After the implementation of the joint energy price and service time estimation, we perform time allocation for the IoT devices optimally by solving the problem in~\eqref{subopt3:main}. These steps are repeated until the stopping criterion of the algorithm is satisfied. The overall iterative algorithm for the JA scheme is summarized in \textbf{Algorithm~\ref{algorithm3}}.
%======================================================================================================
\begin{algorithm}[t]
	\caption{The iterative algorithm for the JA scheme.}
	\label{algorithm3}
	\begin{algorithmic}[1]
		\State \textbf{Input:} The previous output $\{{p_l}^{(n-1)}, {\beta}^{(n-1)}, {\bm{\psi}}^{(n-1)}\}$.
		\State \textbf{Initialize:} $n = 1$, $\{{p_l}^{(0)}, {\beta}^{(0)}, {\bm{\psi}^{(0)}}\}$, tolerance $\xi_1 > 0$.
		\State \textbf{Compute:} the leader's utility ${\bm{U}_L}\!\left( {{{p_l}^{\left( 0 \right)}},{{\beta}^{\left( 0 \right)}}}, {\bm{\psi}}^{(0)} \right)$.
		\State \textbf{Repeat:}
		\State \quad {Obtain the joint optimal $\{{p_l}^{(\!n\!)}\!, {\beta^{(\!n\!)}}\!\}$ from the previous \hspace*{2.53mm} output $\{{p_l}^{(\!n-1\!)}\!, {\beta}^{(\!n-1\!)}\!, {\bm{\psi}}^{(\!n-1\!)}\!\}$ by processing the CCCP \hspace*{2.53mm} algorithm to solve the problem~\eqref{subopt5:main};}
		\State \quad {Derive the optimal ${\bm{\psi}}^{(\!n\!)}\!$ with fixed $\{{p_l}^{(\!n\!)}\!, \!\beta^{(n)}\!\}$ by solving \hspace*{2.53mm} the problem~\eqref{subopt2:main};}
		\State \quad {\textbf{If:}} 
		\State{\quad\quad $\left| {{\bm{U}_L}\!\left({{{p_l}^{\left(\!n\!\right)}}\!, {{\beta}^{\left(\!n\!\right)}}\!, {{\bm{\psi}}^{\left(\!n\!\right)}}}\!\right) \!-\! {\bm{U}_L}\!\left({{{p_l}^{\left(\!{n \!-\! 1}\!\right)}}\!, {{\beta}^{\left(\! {n \!-\! 1}\!\right)}}\!, {{\bm{\psi}}^{\left(\!n \!-\! 1 \right)}}}\! \right)}\!\right| \!<\! {\xi_1}$;}  
		\State \quad {\textbf{Then:}} 
		\State \quad\quad {Set $\{\hat {p_l}^*\!, \hat {\beta}^*\!, {\bm{\hat\psi}}^*\!\} = \{{p_l}^{(\!n\!)}\!, {\beta}^{(\!n\!)}\!, {\bm{\psi}}^{(\!n\!)}\!\}$ and terminate.}
		\State{\quad \textbf{Otherwise:}} 
		\State \quad \quad {Update $n \leftarrow  n + 1$} and continue.
		\State {\textbf{Output:}} The locally optimal solution $\bm{\hat \chi^*} = \{\hat{p_l}^*, \hat{\beta}^*, {\bm{\hat\psi}}^*\}$.
	\end{algorithmic}
\end{algorithm}

%======================================================================================================
\begin{theorem}
	\label{theorem: convergence_optimal_solution_BCD2}
Algorithm~\ref{algorithm3} converges to a locally optimal solution of the Leader maximization problem.  
\end{theorem}
\begin{proof}
	Similar to the proof of the Theorem~\ref{theorem: convergence_complexity_BCD}.
\end{proof}

Finally, we can obtain a local SE for the proposed Stackelberg game, formally stated in the following Theorem.
\begin{theorem}
	\label{theorem: localSE}
	A local optimum ${\bm{\hat \chi}^*}$ obtained by the Theorem~\ref{theorem: convergence_complexity_BCD} or Theorem~\ref{theorem: convergence_optimal_solution_BCD2} combined with the optimal $P_S^*$ of the follower constitutes a local SE satisfying the Definition~\ref{definition:localSE}.
\end{theorem}
\begin{proof}
	It is worth noting that the output $\bm{\hat{\chi}}^*$ obtained by the Theorem~\ref{theorem: convergence_complexity_BCD} or Theorem~\ref{theorem: convergence_optimal_solution_BCD2} is a locally optimal solution of the problem~\eqref{opt1:main}. It means that ${\bm{U}_L}\!{\left({\bm{\hat \chi}^*}, {P_S^*}\right)} \!\!\ge\!\! {\bm{U}_L}{\left({\bm{ \chi}}, {P_S^*}\right)}$ in the neighborhood of $\bm{\hat \chi}^*$. 
	Hence, $\bm{\hat \chi}^*$ is also a local SE of the leader in $\hat S_{\bm \chi} \subset S_{\bm \chi}$ that satisfying the Definition~\ref{definition:localSE}. It combines with the optimal solution $P_S^*$ to constitute the local SE for the proposed Stackelberg game.
\end{proof}

%%%%%%%%%%%%%%%%%%%%%%%%%%%%%%%%%%%%%%%%%%%%%%%%%%%%%%%%%%%%%%%%%%%%%%%%%%%%%%%%%%%%%
\section{Efficiency of the local Stackelberg Equilibrium}
\label{section:Efficiency_SE}
Energy trading based on Stackelberg game formulated in Section~\ref{section:EnergyTrading} captures the strategic interaction between the ISP and ESP. The optimal trading strategies of the ISP and ESP just aim to selfishly maximize each player's own profit. These strategies hence may lead to the performance loss in terms of the total profit achieved by both the ISP and ESP (often referred to as the social welfare). To evaluate the efficiency of the achieved local \textit{Stackelberg equilibrium} (SE), we introduce a \textit{social welfare} maximization approach, as a baseline scenario.

%#############################################################
\subsection{Socially Optimal Welfare Scenario}
In the socially optimal welfare scenario, the ISP and ESP cooperatively maximize the sum of their profits. Mathematically, the utility function of social welfare can be formulated as
\begin{equation}
\begin{aligned}
\label{eq: social_welfare_func}
{{\bm{U}_{SW}}}\! \left(\! {{P_S}}, \beta, \bm{\psi} \right)  =  {\bm{U}_T}\!\left(\! {{P_S},\beta ,\psi } \right) \!-\! {\beta}\!\left({a_m}{P_S^2} \!+\! {b_m}{P_S}\right)\!, 
\end{aligned}
\end{equation}
\begin{figure*}[t]
	\begin{equation}
	\begin{aligned}
	\label{eq: profit_func}
	{\bm{U}_T}\!\left(\! {{P_S}\!,\beta \!,\bm{\psi}\! } \right) \!=\!  {p_r}\!\left\{ \sum\limits_{p = 1}^P\! {{\Omega_B}{\theta_p}{\log _2}\!\left(\!{1 \!+\! {\kappa _p}{P_S}} \!\right)} \!+\!\!\sum\limits_{a = 1}^A\! {\Omega_D}{{\nu _a}} {{\log }_2}\!\!\left(\! {\!1 \!+\! {\delta _a}\!\frac{{\beta{P_S}}}{{{\nu _a}}}} \!\!\right) \!\!+\!\! \sum\limits_{h = 1}^H \!\! \left[ \!{{{\Omega_B}{\tau_h}{\log _2}\!\left(\!{1 \!+\! {\kappa _h}{P_S} } \!\right)} \!\!+\!\! {\Omega_D}{\mu _h}{{\log }_2} \!\!\left(\!\! {1 \!+\! {\delta _h}\!\frac{{( \beta  \!-\! {\tau _h}){P_S}}}{{{\mu _h}}}} \!\!\right)} \!\!\right]\!\! \!\right\}\!.
	\end{aligned}
	\end{equation}
	\hrulefill
\end{figure*}
%%%%%%%%%%%%%%%%%%%%%%%%%%%%%%%%%%%%%%%%%%%
\begin{figure*}[!]
	\begin{center}
		$\begin{array}{ccc} 
		\epsfxsize= 2.8 in \epsffile{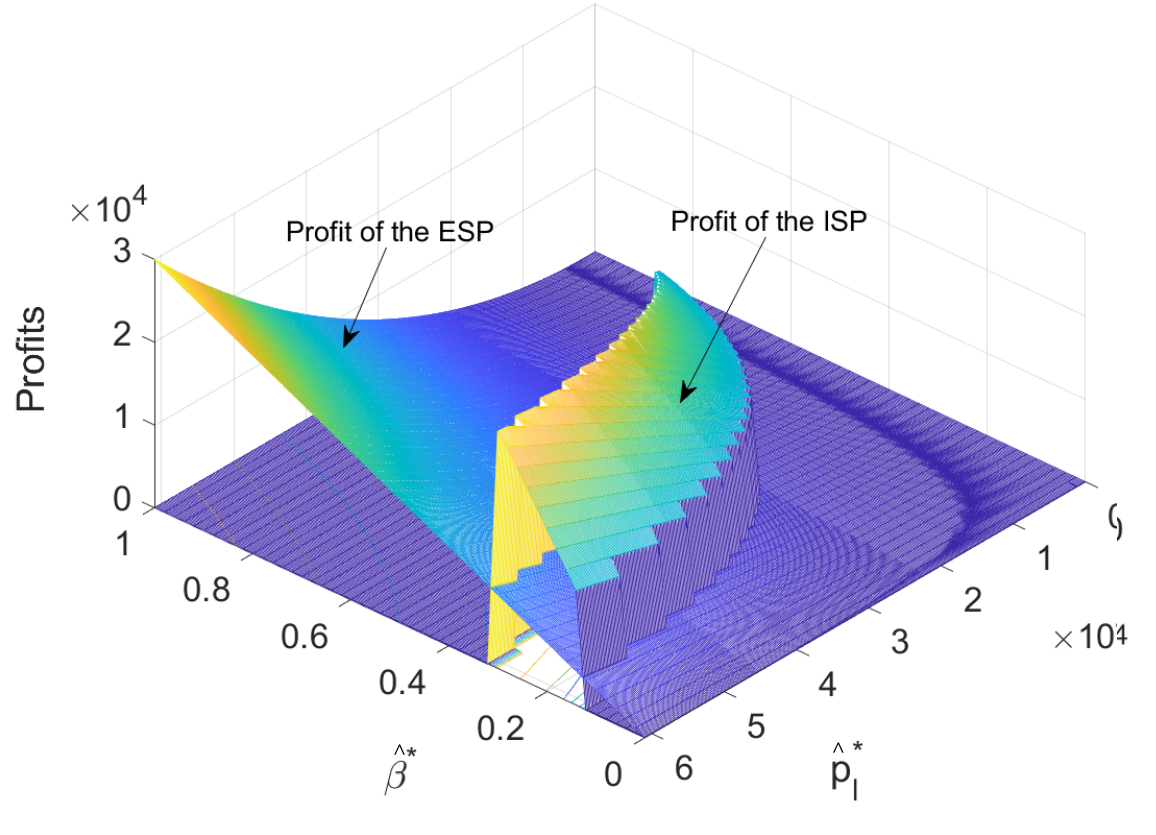}  \quad & 
		\epsfxsize= 2.5 in \epsffile{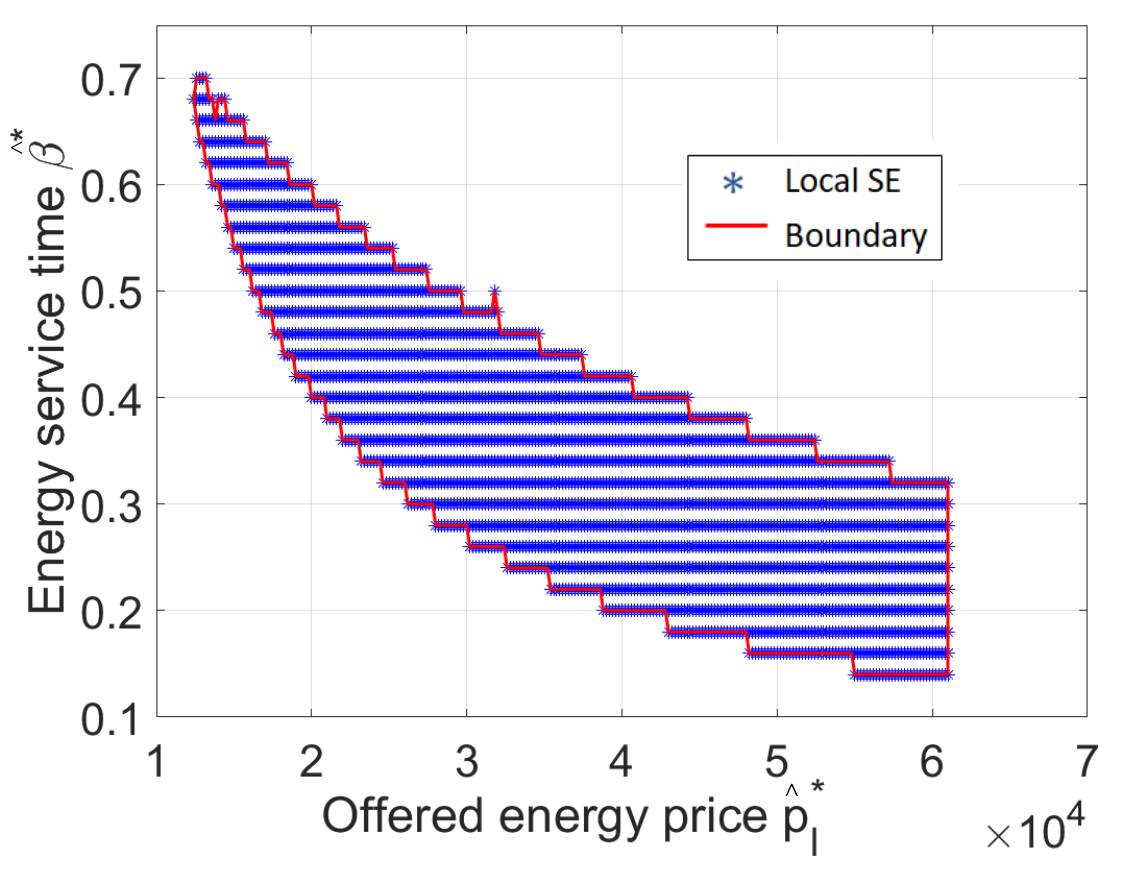} \\ [0.2cm]
		(a) & \quad (b)
		\end{array}$
		\caption{(a) Profits of the leader and follower, (b) Different local SEs vs. locally optimal offered price and energy service time.}
		\label{fig: FollowerUtility}
	\end{center}
	%	\vspace{-8pt}
\end{figure*}
%%%%%%%%%%%%%%%%%%%%%%%%%%%%%%%%%%%%%%%%%%%
where $\bm{U}_T$$\left(\! {{P_S}\!,\beta \!,\bm{\psi}\! } \right)$ is in~\eqref{eq: profit_func}.
The social welfare maximization problem is given by:
\begin{subequations}\label{optSW:main}
	\begin{align}
	&\mathop {\max }\limits_{\left( {{P_S}, \beta, \bm{\psi} } \right)}{{\bm{U}}_{SW}} \!\left( {{P_S}, \beta}, \bm{\psi} \right), \tag{\ref{optSW:main}}\\
	\text{s.t.} \; & 0 \le P_S \le P_{S}^{max}, \label{optSW:a} \\
	& P_a^{min} \le \frac{{{\varphi _a}{g_{c,a}}\beta {P_S}}}{{\nu_a}} \le P_a^{max}, \label{optSW:b}\\
	& P_h^{min} \le \frac{{{\varphi _h}{g_{c,h}}\left({\beta-{\tau _h}}\right){P_S}}}{{\mu_h}} \le P_h^{max}, \label{optSW:c}\\
	& E_a^{min} \le {{{\varphi _a}{g_{c,a}}\beta {P_S}}} \le {E_a^{max}}, \label{optSW:d}\\
	& E_h^{min} \le {{{\varphi _h}{g_{c,h}}\left({\beta-{\tau _h}}\right){P_S}}} \le {E_h^{max}}, \label{optSW:e}\\
	& {\kappa_p}{P_S} \ge \gamma_p^{\min }, \label{optSW:f}\\
	& {\kappa_h}{P_S} \ge \gamma_h^{\min }, \label{optSW:g}\\
	& 0 \le \sum\nolimits_{p = 1}^P {\theta _p} \!+\! \sum\nolimits_{h = 1}^H {\tau _h} \le \beta \le 1, \forall {\theta _p}, \forall {\tau _h} \!\ge\! 0, \label{optSW:h}\\
	&  0 \!\le\!\! \sum\nolimits_{a = 1}^A\! {\nu _a}  \!+\!\! \sum\nolimits_{h = 1}^H \!{\mu _h} \!\le\! 1 \!-\! \beta \!\le\! 1, \forall {\nu _a}, \forall {\mu _h} \!\ge\! 0.  \label{optSW:i}
	\end{align}
\end{subequations}
Similar to the problem~\eqref{opt1:main}, the social welfare maximization problem is also a non-concave problem due to the non-convexity of the constraints~\eqref{optSW:d} and~\eqref{optSW:e}. It can be also solved efficiently for locally optimal solutions with the partitioning schemes proposed in Section~\ref{section:IterativeAlgorithms}.

%#############################################################
\subsection{Price of Anarchy}
To quantify the efficiency of the local SE of the proposed non-cooperative game, we uses the \textit{Price of Anarchy} (PoA)~\cite{Roughgarden2015} which is defined as the ratio of the utility value (i.e., formulated in~\eqref{eq: social_welfare_func}) at the worst local SE \cite{Lee2015Distributed} to its maximum value:
\begin{equation}
PoA = \frac{{{\bm{U}_{SW}}\left( {\bm{{\hat \chi ^*}}} \right)}}{{\mathop {\max }\limits_{\left( {{P_S},\beta ,{\rm{ }}\bm{\psi} } \right)} {\bm{U}_{SW}}\left( {{P_S},\beta ,\bm{\psi} } \right)}}.
\end{equation}

%%%%%%%%%%%%%%%%%%%%%%%%%%%%%%%%%%%%%%%%%%%%%%%%%%%%%%%%%%%%%%%%%%%%%%%%%%%%%%%%%%%%%
\section{Numerical Results}
\label{section:NumericalResults}
In this section, we first investigate the existence of the local \textit{Stackelberg equilibrium} (SE). We then evaluate and compare the proposed framework with ones that are designed for conventional transmission modes. Last but not least, we investigate the efficiency of the local SE. The carrier frequency of RF signals is set at $2.4$ GHz. The bandwidth of the RF signals and the antenna gain of the PB are $10$ MHz and $6$ dBi, respectively. The AWPDs and HWPDs have the signal bandwidth of $1$ MHz and antenna gains of $1.8$ dBi~\cite{Lyu2018Throughput}. Unless otherwise specified, the default distances between the PB and IoT devices are 10 meters (m) and the number of IoT devices is 10. In our setup, the AWPDs and HWPDs have the energy harvesting efficiency coefficients of $\varphi = 0.6$, whilst the scattering efficiency $\eta$ causes a power loss of 1.1 dB at the PWPDs and HWPDs~\cite{Lyu2018Throughput}. In addition, performance gap and noise psd of IoT devices are set at $\zeta = -5$ dB and $N_0 = -100$ dBm, respectively~\cite{Kim2017Hybrid}. A Dell computer with a CPU Intel Core i7-8565U, 16 GB RAM, and GPU Radeon RX 540 series, running MATLAB, is used for our simulations.
\begin{figure*}[!]
	\begin{center}
		$\begin{array}{ccc} 
		\epsfxsize= 2.8 in \epsffile{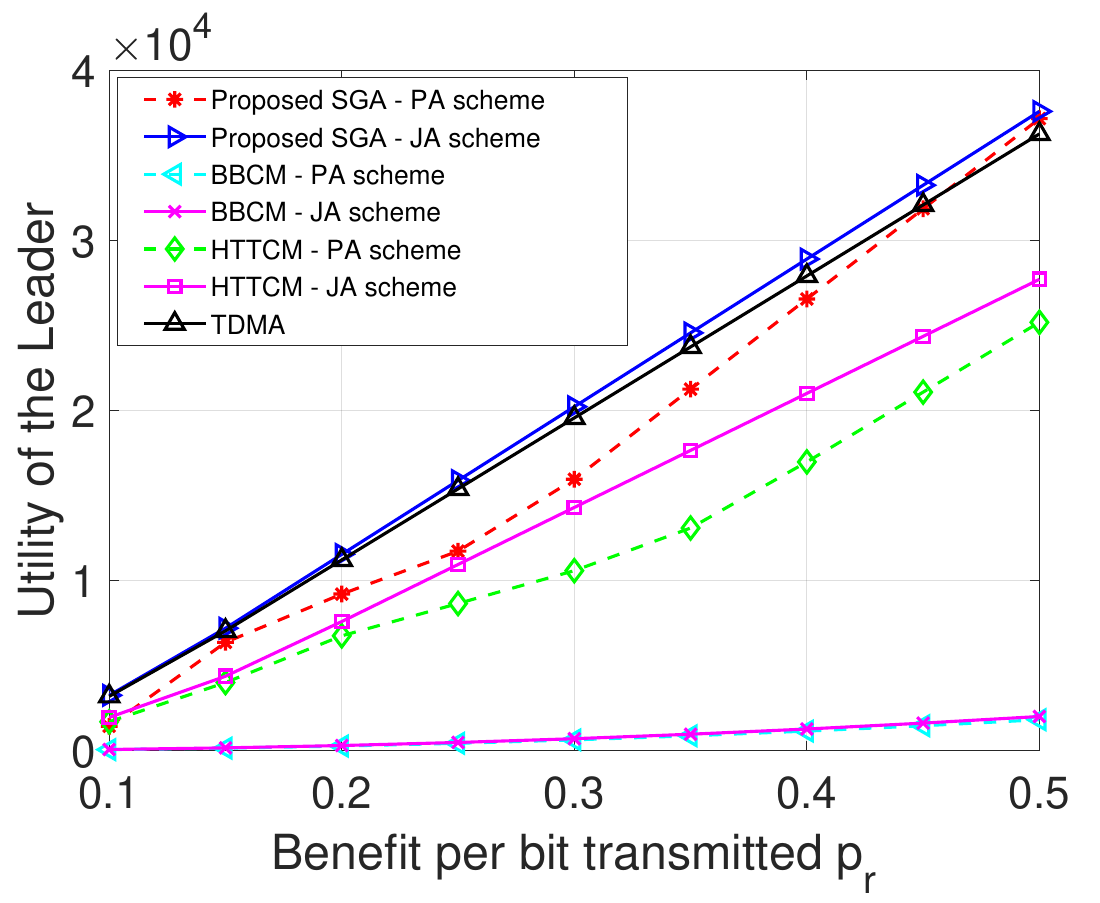}  \quad & 
		\epsfxsize= 2.9 in \epsffile{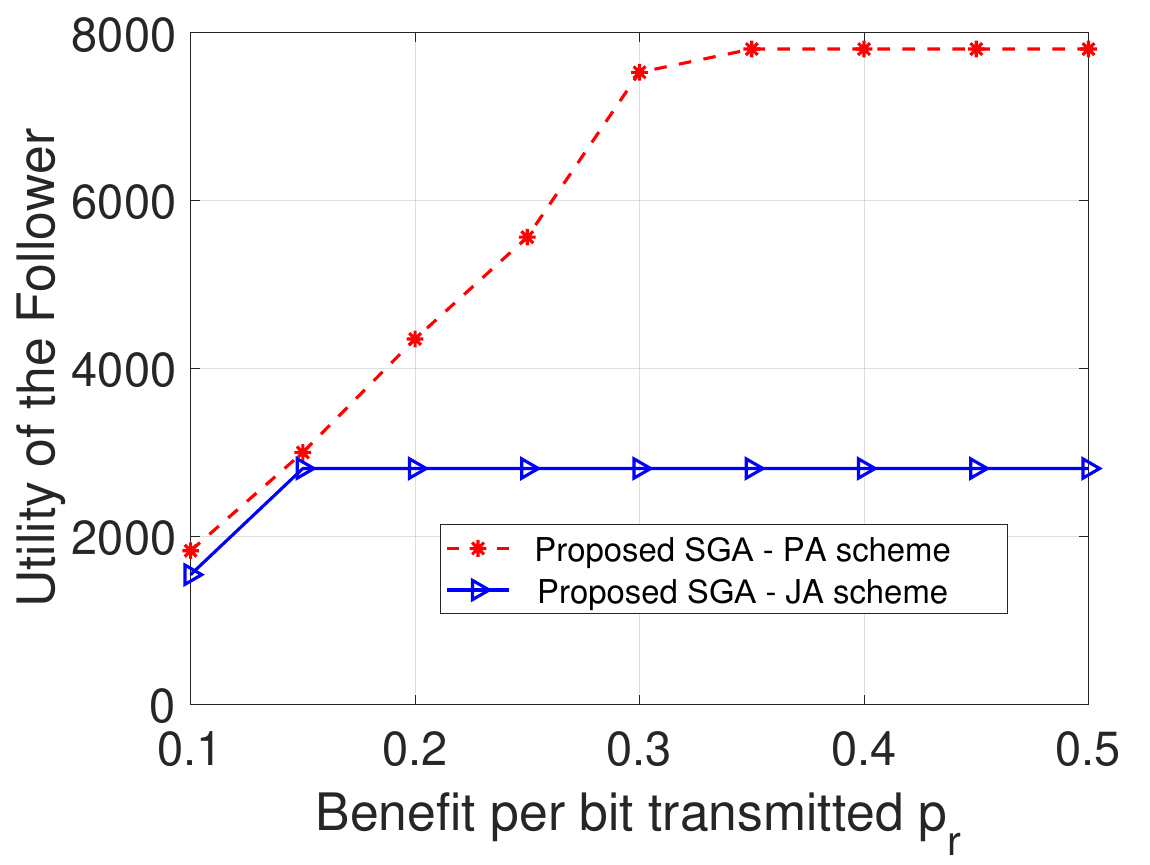} \\ [0.2cm]
		(a) & \quad (b)
		\end{array}$
		\caption{(a) Leader's payoff and (b) Follower's payoff vs. benefit per bit transmitted.}
		\label{fig:Varying_pr}
	\end{center}
	%	\vspace{-8pt}
\end{figure*}
%%%%%%%%%%%%%%%%%%%%%%%%%%%%%%%%%%%%%%%%%%%
%%#######################################################################
\subsection{Existence of the Local Stackelberg Equilibrium}
In this subsection, we investigate the existence of the local SE in Fig.~\ref{fig: FollowerUtility} with respect to the locally optimal offered price $\hat p_l^*$ and energy service time $\hat \beta^*$. With the given $\left(\hat p_l^*, \hat \beta^*\right)$ from the ISP, we can obtain the maximal profit of the ESP by finding the unique optimal value of its transmission power $P_S^*$. Intuitively, this profit increases linearly with the $\hat \beta^*$ and non-linearly with the $\hat p_l^*$ as demonstrated in Fig.~\ref{fig: FollowerUtility}(a). On the other hand, the ISP can obtain different locally optimal solutions $\left(\hat p_l^*, \hat \beta^*\right)$ by using the PA/JA schemes which depend on the different initial values of the $p_l$ and $\beta$. Thus, in our proposed game, we can obtain different local SEs which are shown in Fig.~\ref{fig: FollowerUtility}(b). It can be also seen when the locally optimal offered energy price is high, the ISP prefers to choose a short energy service time and vice versa.

%%#######################################################################
\subsection{Profit of ISP and ESP}
For profit comparison, we consider three conventional communication methods, i.e., the BBCM, HTTCM, and TDMA mechanism, which are implemented at the ISP. It is worth noting that for the TDMA mechanism, all IoT devices are allocated with identical time resources. In this case, the total backscatter time of the IoT devices accounts for a half of the normalized time frame as illustrated in Fig.~\ref{fig:time_frame}. Thus, the operation time of the PB $\beta$ is fixed and equal to the total backscatter time of the IoT devices.

%%========================================================
\subsubsection{Impact of benefit per bit transmitted}
Fig.~\ref{fig:Varying_pr}(a) shows the utility of the leader (i.e., the ISP) and the follower (i.e., the ESP) when the benefit per bit transmitted $p_r$  in the range of $0.1$ to $0.5$ price unit. Obviously, the profits of the ISP obtained by all approaches increase when the benefit per bit transmitted increases. 
%That is a obvious result because with more benefit gained by selling per one data bit, the ISP can purchase energy either in longer time or higher transmission power, or both, thus the profit of the ISP also increases. 
In particular, we first observe that the proposed Stackelberg game approach (SGA), BBCM, and HTTCM solved by the JA scheme always perform better than themselves solved by the PA scheme. The reason is that the JA scheme can optimize the profit of the ISP with respect to both the offered price $p_l$ and the active time of the PB $\beta$. In addition, we also observe that the proposed SGA solved by the JA scheme achieves the highest profit in the considered range of $p_r$. By contrast, the proposed SGA solved by the PA scheme obtains a lower profit for the ISP than that obtained by the TDMA mechanism when the benefit per bit transmitted is smaller than 0.5. This is because the PA scheme tends to offer a high price to get a high transmission power rather than choosing a long period for energy purchasing. Thus, the optimal energy purchasing time in the PA scheme is smaller than that in the TDMA mechanism. In addition, the offered price has more weight than the energy service time in the energy cost. As a result, the PA scheme may perform not as good as the TDMA mechanism in terms of the ISP's profit. Furthermore, due to the low backscatter efficiency, the BBCM solved by both schemes performs much worse than other methods in terms of the ISP's profit, in which the one solved by the JA scheme is slightly better than the one solved by the PA scheme.

The achieved profits of the ESP corresponding to those of the ISP which are optimized by the PA and JA schemes in the proposed game are in Fig.~\ref{fig:Varying_pr}(b). It can be observed that the profit of the ESP in the case using the PA scheme is higher than that using the JA scheme. As mentioned above, the reason is that the ISP using the PA scheme buys a higher transmission power than that using the JA scheme with a given benefit per bit transmitted. 

%%========================================================
\subsubsection{Impact of Number of IoT Devices}
\begin{figure*}[!]
	\begin{center}
		$\begin{array}{ccc} 
		\epsfxsize=2.2 in \epsffile{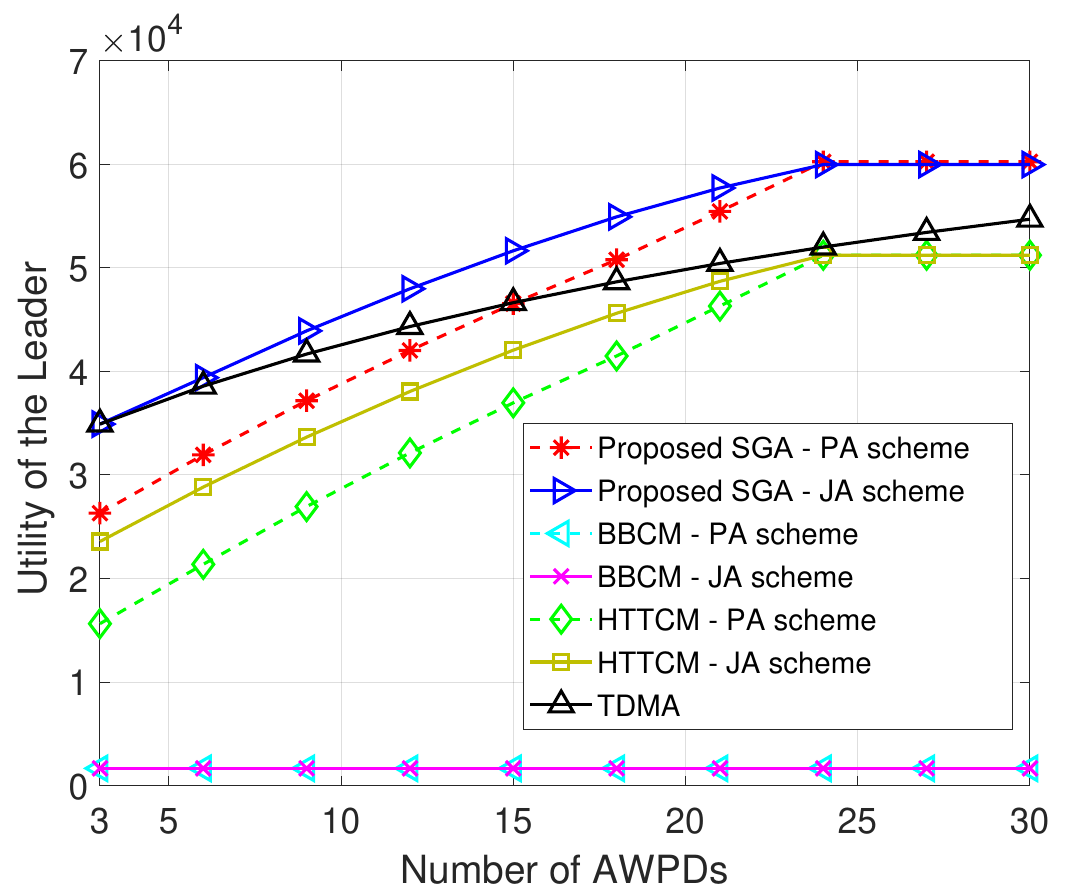} &
		\epsfxsize=2.2 in \epsffile{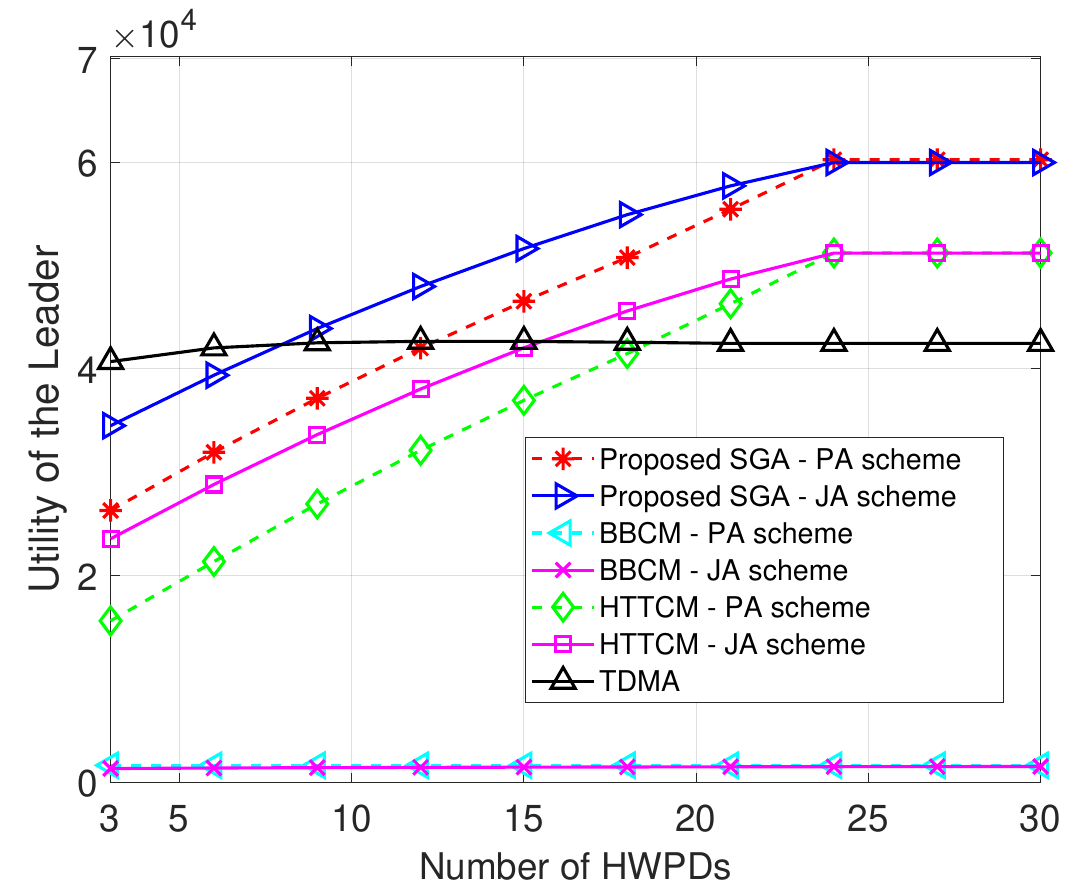} &
		\epsfxsize=2.2 in \epsffile{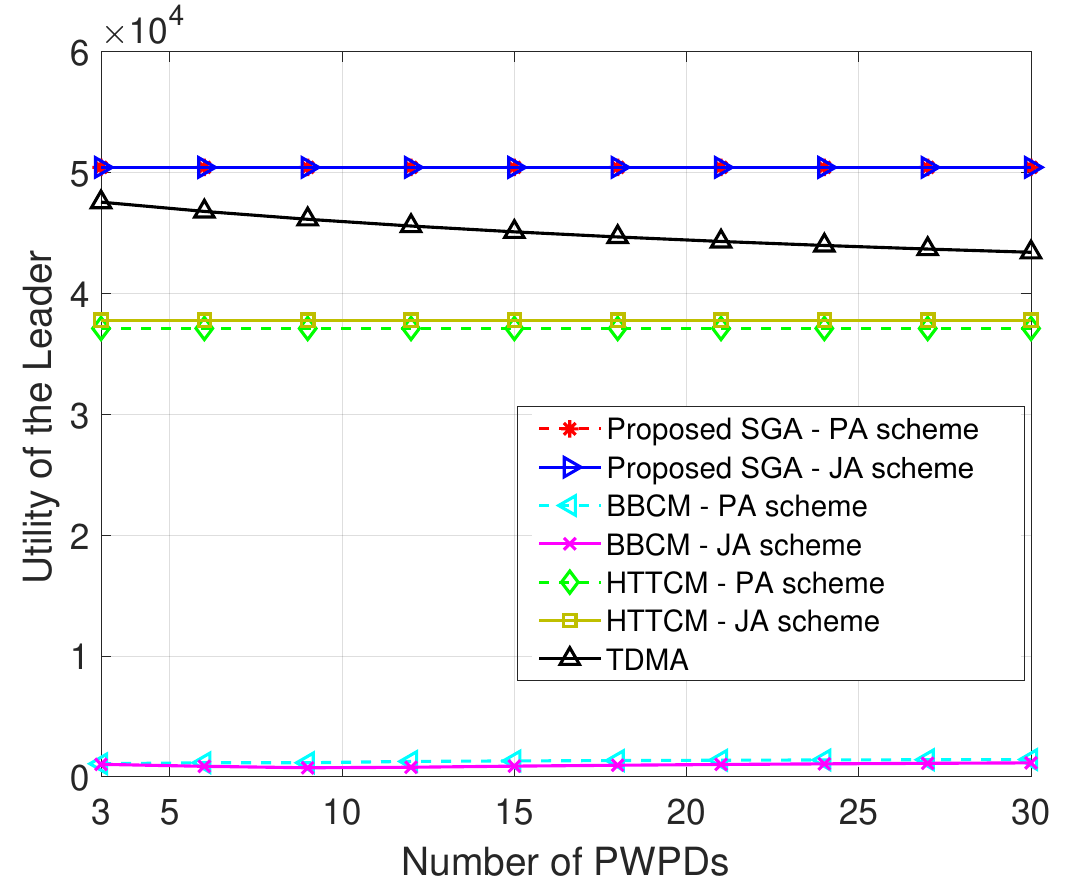} \\ [-0cm]
		(a) & (b) & (c) 
		\end{array}$
		\caption{Leader's payoff under different numbers of (a) HWPDs, (b) AWPDs, and (c) PWPDs.}
		\label{fig:vary_numb_Of_Devices}
	\end{center}
	%	\vspace{-8pt}
\end{figure*}
\begin{figure*}[!]
	\begin{center}
		$\begin{array}{ccc} 
		\epsfxsize=2.0 in \epsffile{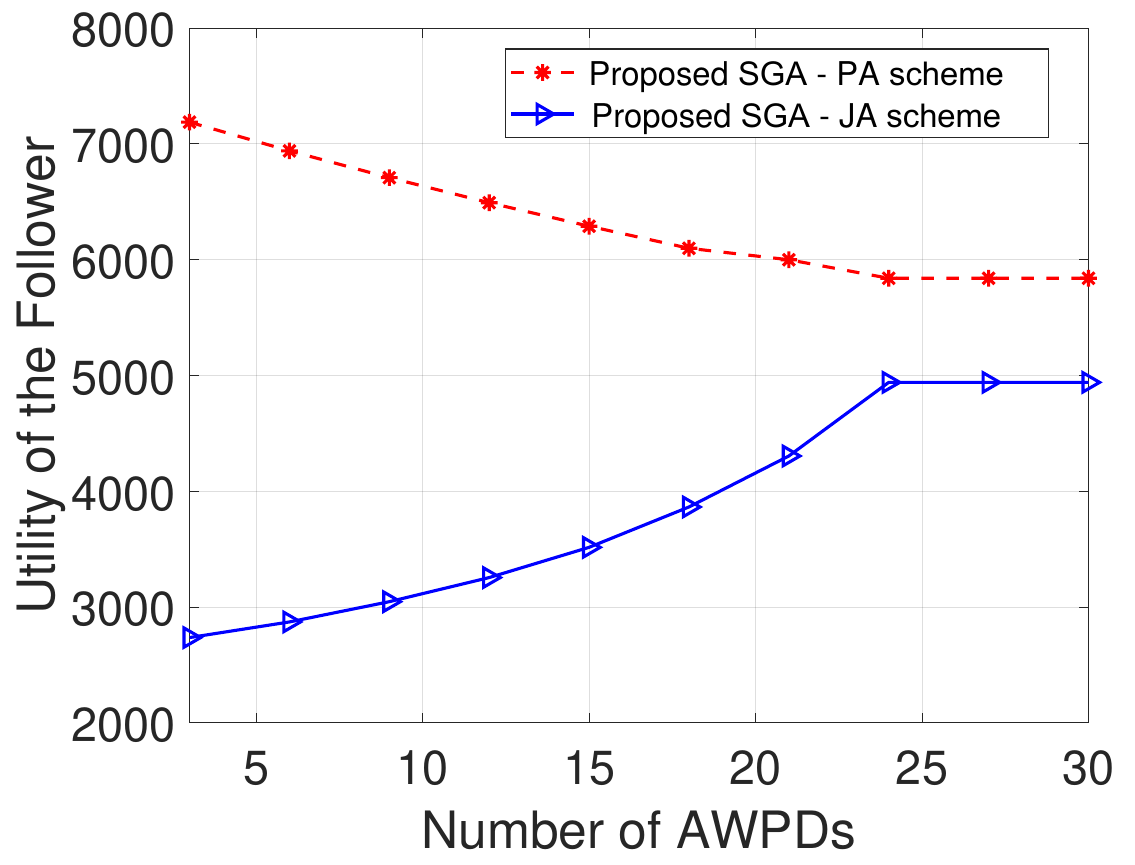} &
		\epsfxsize=2.0 in \epsffile{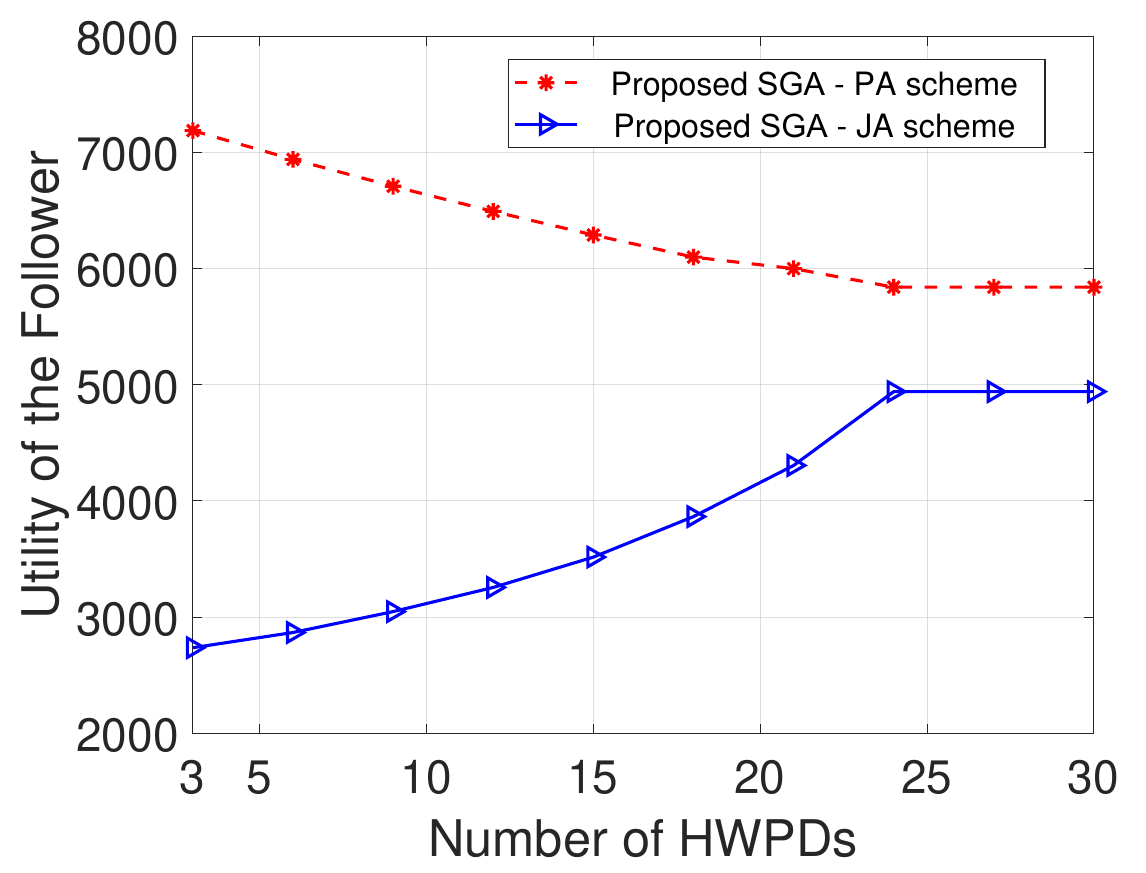} &
		\epsfxsize=2.0 in \epsffile{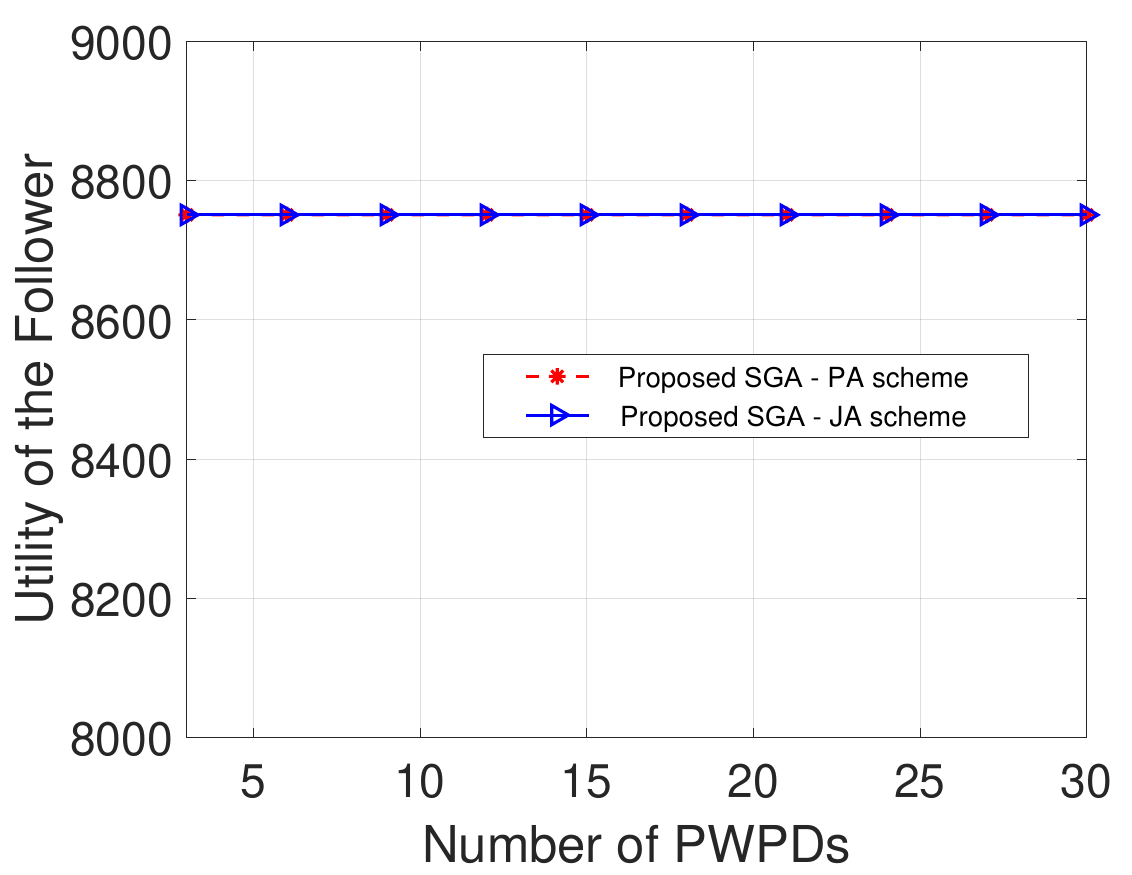} \\ [-0cm]
		(a) & (b) & (c) 
		\end{array}$
		\caption{Follower's payoff under different numbers of (a) HWPDs, (b) AWPDs, and (c) PWPDs.}
		\label{fig:Fol_numbOfIoT}
	\end{center}
	%	\vspace{-8pt}
\end{figure*}
We now investigate the profits of the ISP and the ESP by altering the number of devices for one type from $3$ to $30$, while fixing the number of devices for other types at $10$. Fig.~\ref{fig:vary_numb_Of_Devices}(a) shows the profit of the ISP achieved by the proposed SGA and the conventional communication methods when varying the number of AWPDs. In general, when the number of AWPDs is small (e.g., $< 24$), the ISP's profit obtained by the proposed SGA using the JA scheme increases and is always highest compared with the others. However, when this number increases (i.e., $\geq 24$), the ISP's profits in the proposed SGA that solved by both the PA/JA schemes are equal. There is no more profit added to the proposed SGA due to the power constraint violation of AWPDs. Moreover, the ISP's profit in the TDMA mechanism is greater than that of the proposed SGA using the PA scheme as the number of AWPDs is small (e.g., $<15$). However, when this number increases, the energy harvesting time in the TDMA mechanism reduces, thus the data throughput obtained by active transmission declines. As a result, the ISP's profit obtained by the TDMA mechanism is smaller than that of the proposed SGA. The ISP's profits obtained by the HTTCM using both the PA/JA schemes shows the same trend but are significantly smaller than those achieved by the proposed SGA. These profits are also smaller than that achieved by the TDMA mechanism. 

The similar trends in the ISP's profits obtained by the proposed SGA, HTTCM, and BBCM, when the number of HWPDs increases, are demonstrated in Fig.~\ref{fig:vary_numb_Of_Devices}(b). The reason is that HWPDs can perform both functions, i.e., backscattering or active transmissions, and the throughput achieved by active transmissions is dominant that from the backscatter communications. By contrast, the ISP's profit earned by the TDMA mechanism is greater than those in other approaches before it remains unchanged when the number of devices is greater than or equal to 9. Finally, as illustrated in Fig.~\ref{fig:vary_numb_Of_Devices}(c), the increase in the number of PWPDs does not impact the ISP's profit obtained by the proposed SGA due to the low backscatter rate. In addition, due to sharing the time resources for PWPDs, the ISP's profit achieved by the TDMA mechanism reduces linearly. 

Fig.~\ref{fig:Fol_numbOfIoT}(a) and Fig.~\ref{fig:Fol_numbOfIoT}(b) show the same trend of the ESP's profit when varying the numbers of AWPDs and HWPDs, respectively. Specifically, the ESP's profit in the case of PA scheme is always higher than that in the case of JA scheme due to the different strategies of the ISP. The strategy of the ISP in the case of PA scheme is to use a high transmission power of the PB in a short time which is opposite to the JA scheme. Moreover, the ESP's profit in the case of PA scheme decreases, while this profit in the case of JA scheme increases before they remain unchanged when the numbers of AWPDs and HWPDs are greater than or equal to 24. The profit of the ESP when varying the number of PWPDs is unchanged as shown in Fig.~\ref{fig:Fol_numbOfIoT}(c) because the ISP does not change its strategy.

%%%%%%%%%%%%%%%%%%%%%%%%%%%%%%%%%%%%%%%%%%%
%\begin{figure}[t]
%	%	\centering
%	\includegraphics[scale=0.45]{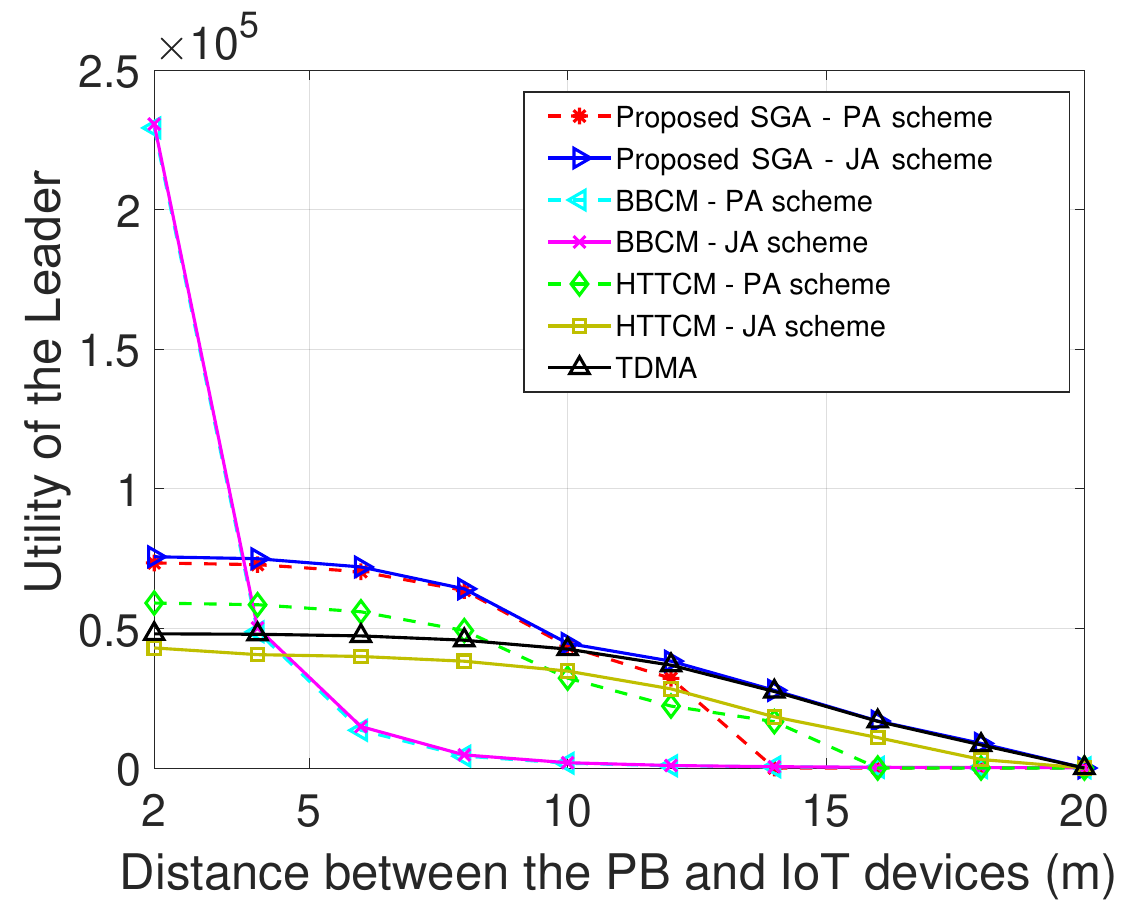}
%	\caption{Profit of the Leader vs. distance between the PB and IoT devices.}
%	\label{fig:Varying_dis}
%\end{figure}
\begin{figure*}[!]
	\begin{center}
		$\begin{array}{ccc} 
		\epsfxsize=2.85 in \epsffile{Figures/Simulation_Results/Varying_distance}  \quad\quad\quad & 
		\epsfxsize=2.85 in \epsffile{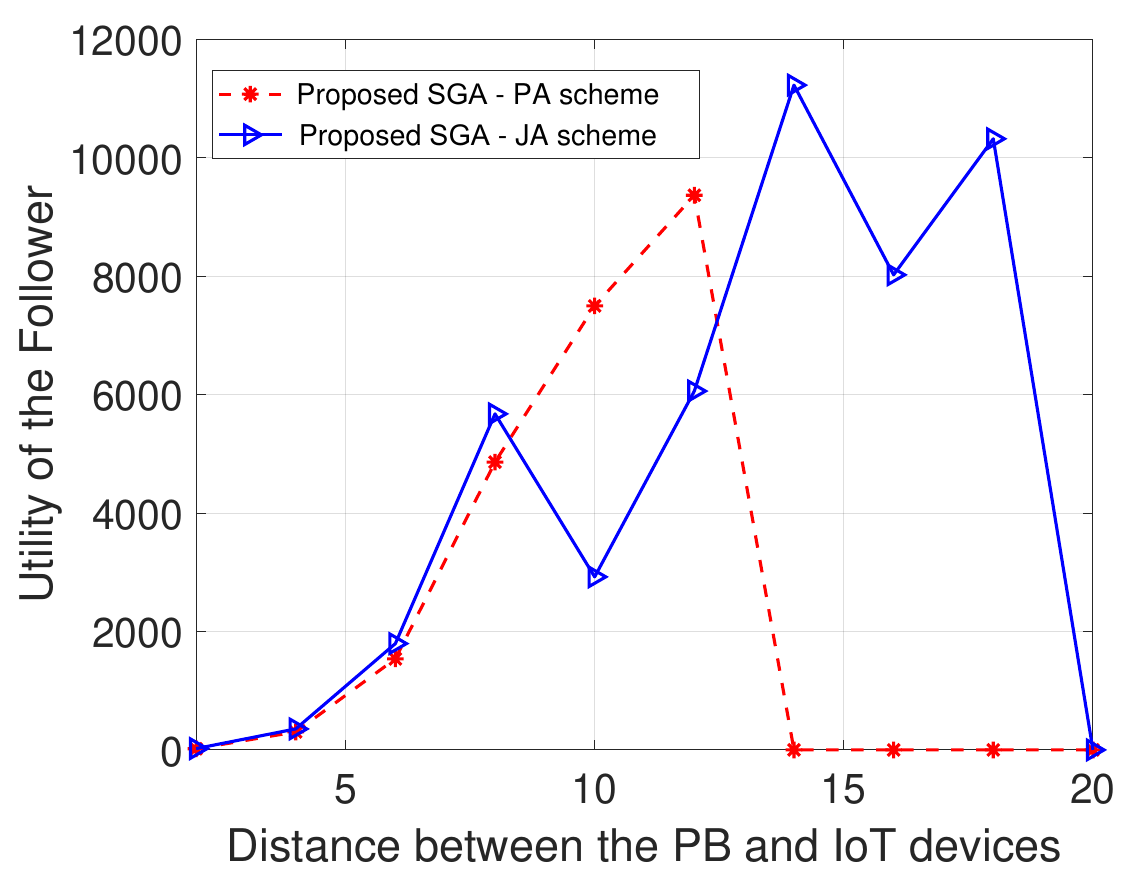} \\ [0.2cm]
		(a) & \quad \quad\quad (b)
		\end{array}$
		\caption{(a) Leader's payoff and (b) Follower's payoff vs. distance between the PB and IoT devices.}
		\label{fig:Varying_dis}
	\end{center}
	%	\vspace{-8pt}
\end{figure*}
%%%%%%%%%%%%%%%%%%%%%%%%%%%%%%%%%%%%%%%%%%%

%%========================================================
\subsubsection{Impact of Distance between PB and IoT Devices}
Finally, we study the profits of the ISP and ESP achieved by the PA/JA schemes w.r.t. the distance between the PB and the IoT devices in Fig.~\ref{fig:Varying_dis}. 
With the distance of 2 meters, the profits of the ISP obtained by the BBCM using both PA/JA schemes are much greater than those of other approaches as the transmission power of the PB is the major impact on the profit of this approach as demonstrated in Fig.~\ref{fig:Varying_dis}(a). However, its profits drastically decrease as the distance increases. By contrast, the profits of the ISP obtained by the proposed SGA and HTTCM slightly reduce when the distance is smaller than 10 meters. There is no more profit for the proposed SGA and HTTCM using the PA scheme when the distance is greater than or equal to 14 and 16 meters, respectively. Whilst the profits of these approaches solved by the JA scheme are only equal to zero at the distance of 20 meters. It is due to the fact that the achieved profit of the ISP by selling data is smaller than the energy cost when the distance is large. The corresponding profits of the ESP achieved by the PA/JA schemes are plotted in Fig.~\ref{fig:Varying_dis}(b). 
The ESP also obtains no profit when the distance between the PB and IoT devices is higher than 14 meters and 20 meters in the case of PA scheme and JA scheme, respectively, since the ISP quits the game as shown in Fig.~\ref{fig:Varying_dis}(a).

%%#######################################################################
\subsection{Computational Efficiency}
Fig.~\ref{fig: Runtime1000} shows the complexity comparison between the PA and JA schemes in the SGA. Because both schemes take only few iterations to converge, thus in order to compare the computational efficiencies of the proposed schemes more precisely, we measure the runtime of these schemes in 1000 tests with different number of IoT devices for each type (i.e., $N = 5, N = 10, N = 15$). In general, we observe that the runtime of both schemes increases in proportion to the number of IoT devices for each type. Specifically, the maximal runtime needed to solve the proposed SGA by the PA and JA schemes with 15 devices are only 11 seconds and just over 13 seconds on average, respectively. In addition, the computational efficiency of the PA scheme is always better than that of the JA scheme. This is due to the fact that the JA scheme has to run two iterative algorithms (i.e., both inner and outer iterative loops) compared with only one iterative loop of the PA scheme. 

%%%%%%%%%%%%%%%%%%%%%%%%%%%%%%%%%%%%%%%%%%%
\begin{figure}[t]
	\centering
	\includegraphics[scale=0.38]{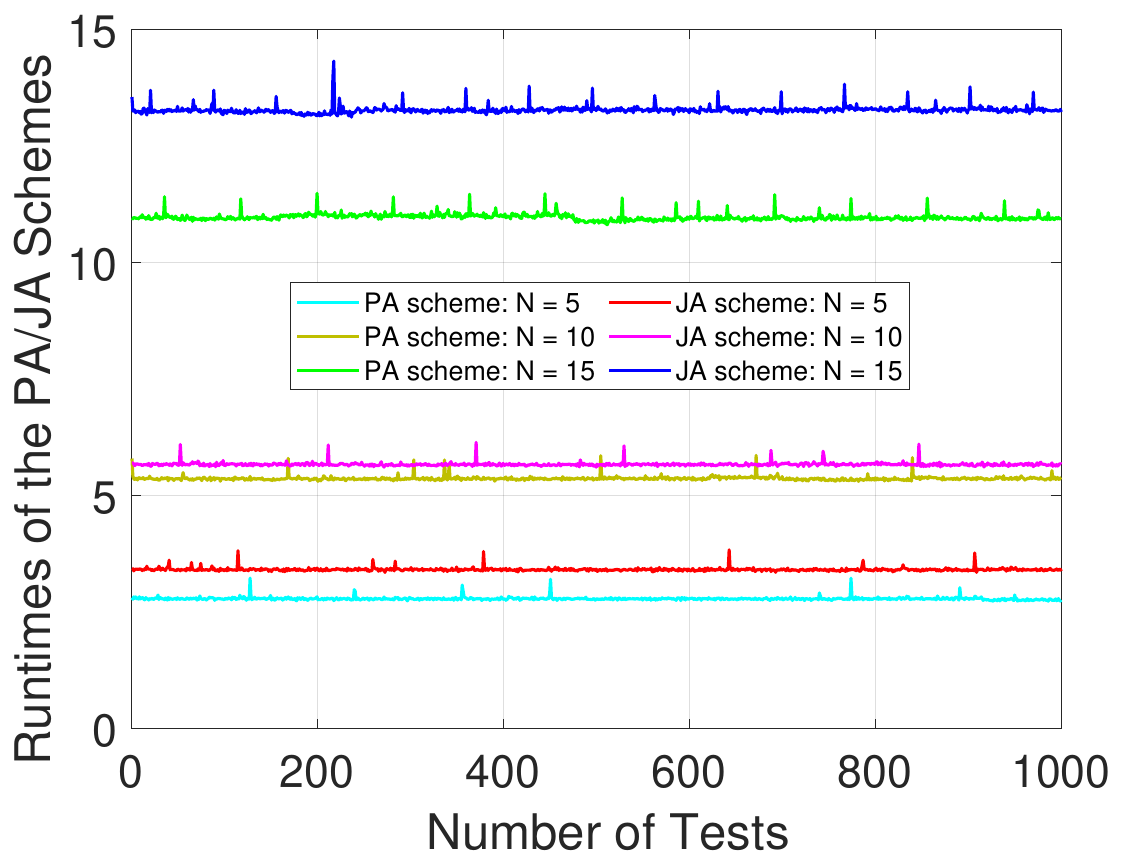}
	\caption{Runtime of proposed schemes.}
	\label{fig: Runtime1000}
\end{figure}
%%%%%%%%%%%%%%%%%%%%%%%%%%%%%%%%%%%%%%%%%%%

%%%%%%%%%%%%%%%%%%%%%%%%%%%%%%%%%%%%%%%%%%%
%\begin{figure*}[!htb]
%	\begin{center}
%		\begin{minipage}[t]{0.42\linewidth}
%			\centering
%			\includegraphics[width=\textwidth]{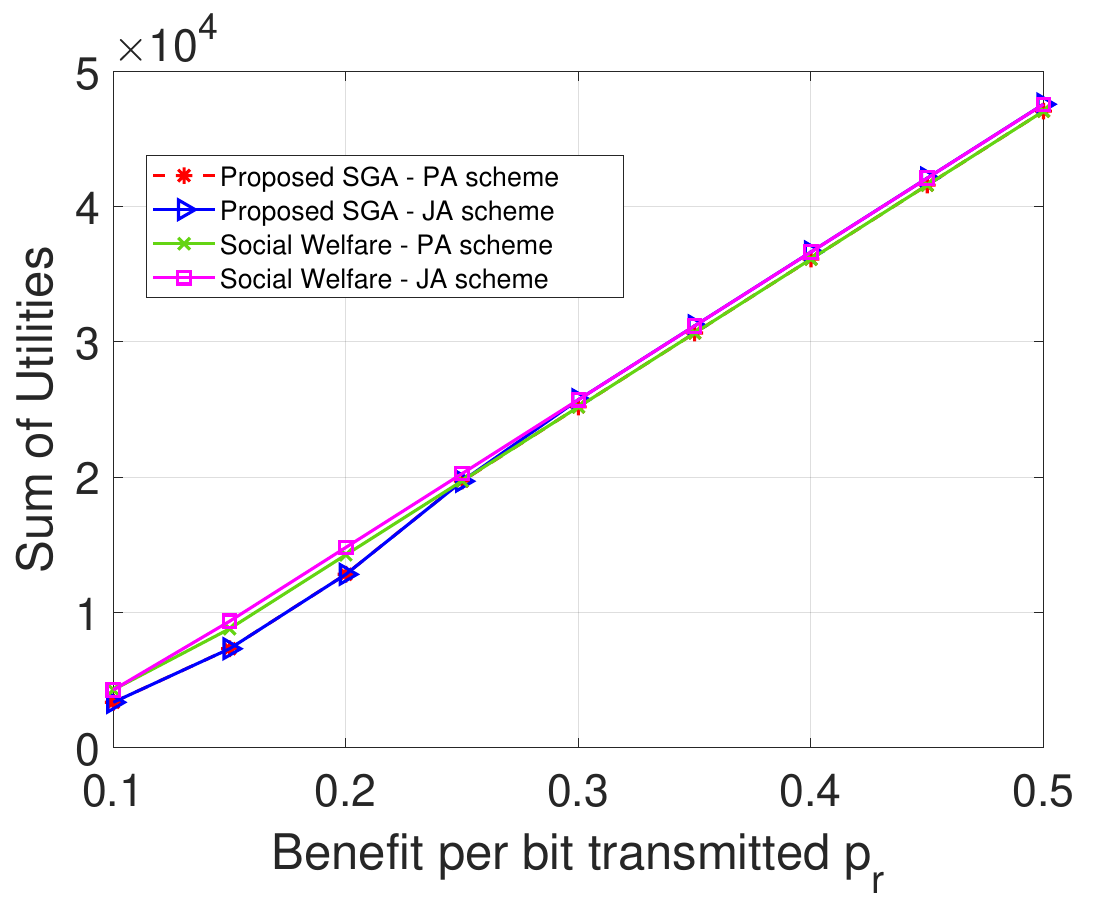}
%			\caption{Total profit of providers vs benefit per bit transmitted.}
%			\label{fig:Baseline}
%		\end{minipage}
%		\hspace{1cm}
%		\begin{minipage}[t]{0.44\linewidth}
%			\centering
%			\includegraphics[width=\textwidth]{}
%			\caption{Price of Anarchy.}
%			\label{fig:PoA}
%		\end{minipage}
%	\end{center}
%\end{figure*}
%%%%%%%%%%%%%%%%%%%%%%%%%%%%%%%%%%%%%%%%%%%
\begin{figure}[t]
	\centering
	\includegraphics[scale=0.38]{}
	\caption{Total profit of providers vs benefit per bit transmitted.}
	\label{fig:Baseline}
%	\vspace{-5pt}
\end{figure}
\begin{figure}[t]
	\centering
	\includegraphics[scale=0.41]{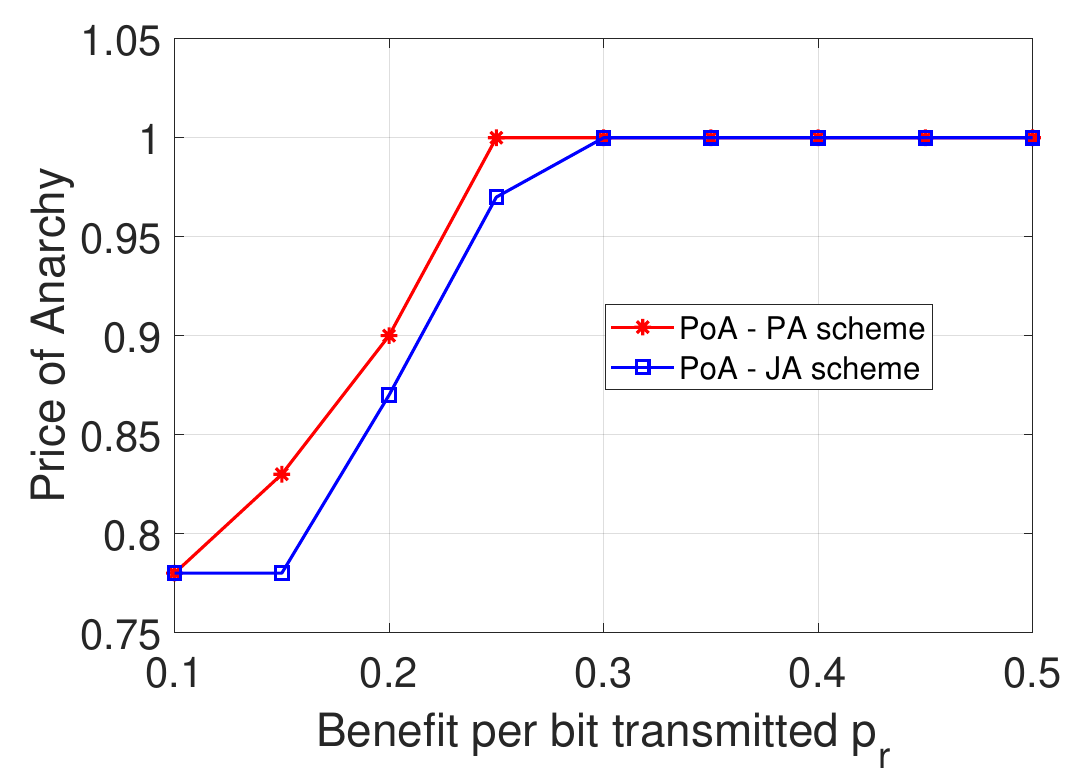}
	\caption{Price of Anarchy.}
	\label{fig:PoA}
\end{figure}
%%%%%%%%%%%%%%%%%%%%%%%%%%%%%%%%%%%%%%%%%%%
%%%%%%%%%%%%%%%%%%%%%%%%%%%%%%%%%%%%%%%%%%%

%%#######################################################################
\subsection{Efficiency of the Local Stackelberg Equilibrium}
Fig.~\ref{fig:Baseline} shows the total profits achieved by the proposed SGA and the socially optimal welfare scenario when varying the benefit per bit transmitted $p_r$. It can be observed that the proposed SGA is asymptotic to the social welfare scenario and the gap narrows gradually when the benefit per bit transmitted increases. It is due to the fact that in the proposed SGA, the ISP prefers to purchase less energy from the ESP when the benefit per bit transmitted is low (i.e., lower than 0.25 for the PA scheme and 0.3 for the JA scheme, respectively) to maximize its profit selfishly. Whilst in the socially optimal welfare scenario, the ISP and ESP surrender their selfish behaviors but collaboratively aim to maximize the total benefit (i.e., social welfare), thus the ISP uses more energy to achieve the higher total profit than that of the proposed SGA. In general, both the total profits of the proposed SGA and the SW scenario achieved by the JA scheme are slightly higher than those obtained by the PA scheme.

Fig.~\ref{fig:PoA} shows the PoA ratios which are used to measure the efficiencies of the local SEs obtained by the PA and JA schemes. In both cases, the PoA ratios are always greater than or equal to 0.78, which means a loss of efficiency of 22\% with respect to the social optimum when the benefit per transmitted bit is less than 0.3. When the benefit per bit increases (higher than 0.3), the PoA approaches 1, suggesting that the proposed SGA is socially optimal.

%
%%%%%%%%%%%%%%%%%%%%%%%%%%%%%%%%%%%%%%%%%%%%%%%%%%%%%%%%%%%%%%%%%%%%%%%%%%%%%%%%%%%%%
\section{Conclusion}
	\label{section:Conclusions}
In this paper, we have jointly optimized the time scheduling and and energy trading to maximize the profits of both the ISP and ESP in heterogeneous IoT wireless-powered communication networks. We have proved the existence and found the local \textit{Stackelberg equilibrium} (SE) that captures the optimal offered price, energy service time, and allocated times for the IoT devices. Simulation results have shown that the proposed Stackelberg game approach solved by the proposed PA/JA schemes always outperform other conventional methods in terms of the ISP's profit. It has also revealed that the JA scheme is superior to the PA scheme in all ISP's performance evaluations. However, the PA scheme yields more profit for the ESP and has better computational efficiency than the JA scheme. Simulations also showed that the obtained local SEs approach the socially optimal welfare when the benefit per transmitted bit is higher than a given threshold.

%%%%%%%%%%%%%%%%%%%%%%%%%%%%%%%%%%%%%%%%%%%%%%%%%%%%%%%%%%%%%%%%%%%%%%%%%%%%%%%%%%%%%
\appendices

%====================================================================================
\section{The proof of Theorem~\ref{theorem: SE_existance}}
\label{App:Theorem1}

First, the utility function of the follower in~\eqref{eq: follower_payoff_func} is a quadratic function w.r.t. $P_S$, thus we can obtain a unique optimal solution as shown in~\eqref{eq:followerSol}. Given the strategy of the ISP, the following inequality holds:
\begin{equation}
\label{eq: theorem1_UF}
	{{\bm{U}}_F}\!\left({{P_S^{*}},p_l^*,\beta^*, \bm{{\psi}}^*} \right) \ge {{\bm{U}}_F}\!\left( {{P_S},p_l^*,\beta^*, \bm{{\psi}}^*} \right).
\end{equation}
Second, given the strategy of the ESP, it is straightforward that the constraint set determined by the constraints~\eqref{opt1:a}-\eqref{opt1:f} are compact and the objective function in the problem~\eqref{opt1:main} is a continuous function on this set. According to the well-known Weierstrass Theorem~\cite{Sundaram1996}, the problem~\eqref{opt1:main} admits at least one globally optimal solution. Thus, it implies that the following inequality holds:
\begin{equation}
\label{eq: theorem1_UL}
{{\bm{U}}_L}\!\left({{P_S^{*}},p_l^*, \beta^*, \bm{{\psi}}^*} \right) \ge {\bm{U}_L}\!\left({{P_S^{*}},{p_l},\beta, \bm{{\psi}}}\right).
\end{equation}
Finally, from the inequalities~\eqref{eq: theorem1_UF} and~\eqref{eq: theorem1_UL}, we can conclude that there always exists at least a SE that satisfies the Definition~\ref{definition: SEpoint}.
%====================================================================================
\section{The proof of Lemma~\ref{lemma: convex_proof_subopt3}}
\label{App:lemma3}

We consider the function  ${\tilde{G}} {(\bm{\psi})}$ in \eqref{eq: funcG3} contributed by four terms $G_p(\bm{\theta}) = \!\sum_{p = 1}^P g_p(\theta_p)$, $G_a(\bm{\nu}) \!= \!\sum_{a=1}^A g_a(\nu_a)$, $G_h(\bm{\tau},\bm{\mu}) \!= \!\sum_{h = 1}^H g_h(\tau_h,\mu_h)$ and a constant $\tilde{C} \!= -{c_3}{p_l^{(n)}\!{\beta^{(n)}}}$ where
\begin{equation}
\begin{aligned}
\left\{\!	\begin{array}{ll}
\!\!{g_p}({\!\theta _p\!})\!\!\!\!&\!=\! {{c_1}{\theta _p}\log_2 \left(1 + {c_3}{\kappa_p}\right)},\\
\!\!{g_a}({\!\nu _a\!})\!\!\!\!&\!=\! {{c_2}{\nu _a}{{\log }_2}\!\left[\! {1\! +\! \frac{{c_3}{\delta_a}{\beta^{(n)}}}{{{\nu _a}}}} \right]},\\
\!\!{g_h}({\!\tau _h},{\mu _h}\!)\!\!\!\! &\!=\!  {c_1}{\tau_h}\!\log_2 \!\left(1 \!+\! {c_3}{\kappa_h}\right) \!+\! {c_2}{\mu_h}\!\log_2\!\!\left[\! {1 \!+\! \frac{{c_3}{\delta_h}\left(\!\beta^{(\!n\!)} - \tau_h\!\right)}{{{\mu _h}}}} \!\!\right]\!\!.
\end{array}\!\!	\right.
\end{aligned}
\end{equation}
It is worth noting that the first term $G_p(\bm{\theta})$ is a sum of linear functions of $\theta_p, \forall p \in \{1,\dots,P\}$.
The second term $G_a(\bm{\nu})$ and third term $G_h(\bm{\psi})$ are sums of concave functions w.r.t. $\nu_a, \forall a \in \{1,\dots, A\}$ and $\tau_h, \mu_h, \forall h \in \{1, \dots, H\}$, respectively, which are straightforward proved by considering their Hessian matrices. Moreover, $\tilde{C}$ is a constant with the fixed $\{p_l^{(n)}, \beta^{(n)}\}$.
Finally, we can conclude that the function $\tilde{G}$ is a concave function w.r.t. $\bm{\psi} \buildrel \Delta \over =  { ( \bm{\theta},  \bm{\nu},  \bm{\tau}, \bm{\mu} )}$ and the problem can be efficiently solved by the interior-point method~\cite{Boyd2004}.

%====================================================================================
\section{The proof of Theorem~\ref{theorem: convergence_complexity_BCD}}
\label{App: BCD_proof}
We define a constraint set $R(\bm{\chi}) \buildrel \Delta \over = \{R_1(\bm{\chi}), \dots, R_J(\bm{\chi})\}$ determining the feasible region (denoted by $S_{\bm{ \chi}}$) of the problem~\eqref{opt1:main} where $R_i(\bm{\chi})$ is an \textit{i}-th constraint $\forall i \in \{1,\dots,J\}$ and $J$ is the total number of constraints.
Similar to the convergence proof in~\cite{Liao2020}, at $n$-th iteration, from Lemma~\ref{lemma: convex_proof_subopt1},~\ref{lemma: convex_proof_subopt2},~\ref{lemma: convex_proof_subopt3}, we have:
\begin{equation}
\begin{aligned}
\left\{\!	\begin{array}{ll}
{\bm{U}_L}\!{\left({p_l^{(\!n\!)}}\!, {\beta^{(\!n-1\!)}}\!, {\bm{\psi}^{(\!n-1\!)}}\!\right)} \!\ge\! {\bm{U}_L}\!{\left({p_l^{(\!n-1\!)}}\!, {\beta^{(\!n-1\!)}}\!, {\bm{\psi}^{(\!n-1\!)}}\!\right)},\\
{\bm{U}_L}{\left({p_l^{(n)}}\!, {\beta^{(n)}}\!, {\bm{\psi}^{(n-1)}}\!\right)} \!\ge\! {\bm{U}_L}{\left({p_l^{(n)}}\!, {\beta^{(n-1)}}\!, {\bm{\psi}^{(n-1)}}\!\right)}\!,\\
{\bm{U}_L}{\left({p_l^{(n)}}, {\beta^{(n)}}, {\bm{\psi}^{(n)}}\right)} \ge {\bm{U}_L}{\left({p_l^{(n)}}, {\beta^{(n)}}, {\bm{\psi}^{(n-1)}}\right)}.
\end{array}\!\!	\right.
\end{aligned}
\end{equation}

Due to the transitive property, we can obtain ${\bm{U}_L}\!{\left(\!{p_l^{(\!n\!)}}\!, {\beta^{(\!n\!)}}\!, {\bm{\psi}^{(\!n\!)}}\!\!\right)} \!\!\ge\!\! {\bm{U}_L}\!{\left(\!{p_l^{(\!n-1\!)}}\!, {\beta^{(\!n-1\!)}}\!, {\bm{\psi}^{(\!n-1\!)}}\!\!\right)}\!$ which implies that the objective function of the problem~\eqref{opt1:main} is non-decreasing after each iteration.  It is worth noting that the objective function $\bm{U}_L(\bm{\chi})$ is continuous over $R(\bm{\chi})$ which is a compact set. Thus, it is upper-bounded by some finite positive number which means there exists an output $\bm{\hat \chi}^* \buildrel \Delta \over= \bm{\chi}^{(n)}$ satisfying the stopping criterion of \textbf{Algorithm~\ref{algorithm1}}. Therefore, \textbf{Algorithm~\ref{algorithm1}} is guaranteed to be converged.

We then prove $\bm{\hat \chi}^*$ is a locally optimal solution of the problem~\eqref{opt1:main} by following the convergence proof of the BCD method in~\cite{Bertsekas1999}.
It is straightforward that $\bm{U}_{\!L}(\bm{\hat\chi}^*\!) \ge \bm{U}_{\!L}(p_l\!, \beta^{(\!n-1\!)}\!, \bm{\psi}^{(\!n-1\!)}\!), \forall p_l \!\in\! S_{p_l}$ where $S_{p_l}$ is the feasible set of the concave problem~\eqref{subopt1:main}. When ${\lim _{n \to \infty }}\left\| {{\bm{\chi} ^{\left( n \right)}} - {\bm{\chi} ^{\left( {n - 1} \right)}}} \right\| = 0$, we have $\bm{U}_L(\bm{\hat\chi}^*) \ge \bm{U}_L(p_l, \hat {\beta}^*, \bm{\hat \psi}^*), \forall p_l \in S_{p_l}$. Using the K.K.T. conditions for this problem, we obtain that $\nabla_{p_l} L\left( {\hat{p_l}^*,\bm{e}} \right) = 0$ over $S_{p_l}$ where:
\begin{equation}
\nabla_{\!p_l} L\left( {p_l,\bm{e}} \right) = \nabla_{\!p_l} {\bm{U}_{\!L}}\left( p_l  \right) + \sum\limits_{i = 1}^J {e_i}{\nabla_{\!p_l}{R_i}\left( p_l  \right)},
\end{equation}
where $\bm{e} \!\buildrel \Delta \over =\! \{\!e_1\!, \dots, e_J\!\}$ is a Lagrange multiplier vector. Similar repetitions are conducted for the other variable blocks, thus we also have $\nabla_{\!\beta} L\left(\! {\hat {\beta}^*,\bm{e}} \!\right) \!=\! 0$ and  $\nabla_{\!\bm{\psi}} L\left(\! {\bm{\hat \psi}^*,\bm{e}}\! \right) \!=\! 0$ over feasible sets $S_{\beta}$ and $S_{\bm{\psi}}$, respectively. Summarizing these results and using the Cartesian product structure of a feasible set $\hat S_{\bm{\chi}}  \subset S_{\bm{\chi}}$ from $S_{p_l}, S_{\beta}, \text{and } S_{\bm{\psi}}$, we can obtain $\nabla_{\!\bm{\chi}} L\left( {\bm{\hat \chi}^*,\bm{e}} \right) = 0$ over $\hat S_{\bm{\chi}}$ which means that $\bm{\hat\chi}^*$ satisfies the K.K.T. conditions. Moreover, the objective function of the problem~\eqref{opt1:main} is a non-decreasing function over the $\hat S_{ {\bm{\chi}}}$ as the aforementioned discussion. Thus, we can obtain that $\bm{\hat\chi}^*$ is a locally optimal solution of the problem~\eqref{opt1:main}. Then, the proof is completed.

% deriving the Lagrangian function associated with the problem~\eqref{opt1:main} as follows:
%\begin{equation}
%L\left( {\bm{\chi},\bm{e}} \right) = {U_L}\left( \bm{\chi}  \right) + \sum\limits_{i = 1}^J {e_i}{{R_i}\left( \bm{\chi}  \right)}.
%\end{equation}
%====================================================================================
\section{The proof of Theorem~\ref{theorem: convergence_optimal_solution_CCCP}}
\label{App: CCCP_proof}
We first prove the convergence of \textbf{Algorithm~\ref{algorithm2}}. For $k > 1$, we have:
\begin{equation}
\begin{aligned}	
\label{eq: converge_CCCP}
%	\begin{array}{l}
	{\hat{Q}}\!\left( \!{{V^{\left( \!k\! \right)}}}\! \right) & \!\buildrel \Delta \over = \! {Q_{ccav}}\!\left(\! {{V^{\left( k \right)}}}\! \right) \!+\! {Q_{cvex}}\!\left(\! {{V^{\left( k \right)}}} \!\right)\\
	&\ge\! {Q_{ccav}}\!\!\left(\!\! {{V^{\left( \!k\! \right)}}}\!\! \right) \!\!+\! {Q_{cvex}}\!\!\left( \!\!{{V^{\left( \!{k - 1}\! \right)}}}\! \right) \\
	& \quad + {\left(\! {{V^{\left(\! k\! \right)}} \!-\! {V^{\left( \!{k - 1} \!\right)}}} \!\right)^T}\!\!\nabla {Q_{cvex}}\!\!\left(\!\! {{V^{\left( \!{k - 1} \!\right)}}} \!\right)\\
	&\ge\! {Q_{ccav}}\!\!\left(\! {{V^{\left(\! {k - 1} \!\right)}}} \!\!\right) \!+\! {\left( \!{{V^{\left(\! {k - 1}\! \right)}} } \!\!\right)^T}\!\nabla {Q_{cvex}}\!\!\left(\! {{V^{\left(\! {k - 1} \!\right)}}} \!\!\right) \\
	& \quad \!+\! {Q_{cvex}}\!\!\left(\! {{V^{\left(\! {k - 1} \!\right)}}}\!\! \right) \!-\! {\left(\! \!{{V^{\left( \!{k - 1}\! \right)}}}\!\!\right)^T}\!\nabla {Q_{cvex}}\!\!\left( \!{{V^{\left( \!{k - 1} \!\right)}}}\!\! \right)\\
	&=\! {Q_{ccav}}\!\!\left(\! {{V^{\left(\! {k - 1}\! \right)}}}\!\! \right) \!\!+\! {Q_{cvex}}\!\!\left(\! {{V^{\left(\! {k - 1}\! \right)}}}\! \right)  \!\buildrel \Delta \over =\!  {\hat{Q}}\!\left( \!{{V^{\left( \!{k - 1} \!\right)}}} \!\!\right)\!\!,
%	\end{array}
	\end{aligned}
\end{equation}
where the first inequality in~\eqref{eq: converge_CCCP} is derived from the first order Taylor approximation of a convex function~\cite{Boyd2004}:
\begin{equation}
	{Q_{cvex}}\!\!\left( \!\!{{V^{\left( \!k\! \right)}}} \!\right) \!\ge\! {Q_{cvex}}\!\left( \!{{V^{\left( \!{k - 1} \!\right)}}} \!\right) \!+\! {\left( \!{{V^{\left(\! k\! \right)}} \!-\! {V^{\left(\! {k - 1} \!\right)}}} \!\right)^T}\!\nabla {Q_{cvex}}\!\left(\! {{V^{\left(\! {k - 1} \!\right)}}}\!\! \right)\!.
\end{equation}
The second inequality is obtained from~\eqref{eq: subopt6}. We define ${S_V}$ to be the set of $V$ satisfying the constraints~\eqref{subopt5:a}-\eqref{subopt5:e}. Similar to the proof in the Appendix~\ref{App: BCD_proof} since the function $\hat{Q}(V)$ is continuous on $S_V$ which is a compact set, it is upper-bounded by some positive value when $k$ tends to infinity. Then, the CCCP algorithm will converge to $V^{*}$, i.e., $V^{*} \buildrel \Delta \over = V^{(k)} = V^{(k-1)}$. Thus, \textbf{Algorithm~\ref{algorithm2}} is converged.

Next, we prove that $V^{*}$ is a local optimum of the optimization problem~\eqref{subopt5:main}. We define a constraint set ${R}\left(V\right) \buildrel \Delta \over = \left\{ {{R_1}\left( V \right),{R_2}\left( V \right), \ldots ,{R_J}} \right\}$, where $R_i (V)$ is an \textit{i}-th constraint and $J$ is the total number of constraints in the problem~\eqref{eq: subopt6}. Since $S_V$ is a compact set and ${\tilde{Q}}\left(V\right) = {{Q_{ccav}}\!\left( V \right) + {V^T}\nabla {Q_{cvex}}\!\left( \!{{V^{\left( {k - 1} \right)}}} \right)}$ is a concave function of $V$, we have the K.K.T. conditions for the optimization problem~\eqref{eq: subopt6} as follows:
\begin{equation}
\left\{\!\!\! {\begin{array}{{c}}
	\!{\!\nabla {Q_{ccav}}\!\!\left(\! {{V^{\left( \!k \!\right)}}} \!\right) \!+\! \nabla {Q_{cvex}}\!\left(\! {{V^{\left(\! {k - 1}\! \right)}}}\! \right) \!+\! {Y^T}\nabla R\!\left( \!{{V^{\left( \!k\! \right)}}}\!\right) \!=\! 0},\\
	\!{Y \!\!=\!\! \left[ {{y_1}, \ldots ,{y_J}} \right]\!,{y_i} \!\ge\! 0,{y_i}{R_i}\!\left( \!{{V^{\left(\! k \!\right)}}}\! \right) \!=\! 0,\forall i \!=\! \left\{\! {1, \ldots ,J}\! \right\}}\!,
	\end{array}} \right.
\end{equation}
where $Y$ is the optimal Lagrangian variable set for $V^{(k)}$. When $V^{(k)} = V^{(k-1)} = V^{*}$, the above equation set can be rewritten as follows:
\begin{equation}
\left\{\!\! {\begin{array}{*{20}{c}}
	{\nabla {Q_{ccav}}\left( {{V^*}} \right) + \nabla {Q_{cvex}}\left( {{V^*}} \right) + {Z^T}\nabla R\!\left( {{V^*}} \right) = 0},\\
	{\!\!\!Z \!\!=\!\! \left[ {{z_1}, \ldots ,{z_J}} \! \right]\!,{z_i} \!\ge\! 0,{z_i}{R_i}\!\left(\! {{V^*}} \!\right) \!=\! 0,\forall i\!=\! \left\{ \!{1, \ldots ,J} \!\right\}},
	\end{array}} \right.
\end{equation}
where  $Z$ is the optimal Lagrangian variable set for $V^{*}$. Thus, $V^{*}$ also satisfies the K.K.T. conditions for the problem~\eqref{subopt5:main}. Moreover, from the inequality~\eqref{eq: converge_CCCP}, we have the $Q(V)$ is a non-decreasing function over $S_V$. Therefore, $V^{*}$ is a local optimum of the problem~\eqref{subopt5:main}. The proof is similar to~\cite{Feng2015}.

%====================================================================================
%\section{The proof of Theorem~\ref{theorem: localSE}}
%\label{App: LocalSE}
%It is worth noting from the proof of Theorem~\ref{App: BCD_proof} that the output $\bm{\hat{\chi}}^*$ obtained by the Theorem~\ref{theorem: convergence_complexity_BCD} or Theorem~\ref{theorem: convergence_optimal_solution_BCD2} is a locally optimal solution of the problem~\eqref{opt1:main}. It means that ${\bm{U}_L}\!{\left(\!{\hat{p_l}^{*}}\!, {\hat{\beta}^{*}}\!, {\bm{\hat{\psi}}^{*}}\!\!\right)} \!\!\ge\!\! {\bm{U}_L}{\left({p_l}, {\beta}, {\bm{\psi}}\right)}$ in the neighborhood of $\bm{\hat \chi}^*$. 
%Hence, the stationary point $\bm{\hat \chi}^*$ is also a local SE of the leader in $\hat S_{\chi}$ that satisfying the Definition~\ref{definition:localSE}. It combines with the optimal solution $P_S^*$ to constitute the local SE for the proposed energy trading game. Then, the proof is completed.

%%%%%%%%%%%%%%%%%%%%%%%%%%%%%%%%%%%%%%%%%%%%%%%%%%%%%%%%%%%%%%%%%%%%%%%%%%%%%%%%%%%%%

\end{document}